\documentclass[11pt]{article}
\usepackage[utf8]{inputenc}
\usepackage[T1]{fontenc}
\usepackage{lmodern}
\usepackage{geometry}
\usepackage[dvipsnames]{xcolor}
\usepackage[colorlinks=true]{hyperref}
\hypersetup{
    linkcolor=Blue,
    citecolor=Blue,
    urlcolor=Blue
}
\usepackage{amsmath,amssymb,amsthm,amsfonts,mathtools}
\usepackage{graphicx,tikz}
\usetikzlibrary{arrows.meta}
\usepackage{boxedminipage}
\theoremstyle{plain}
\usepackage{complexity}
\usepackage{braket}
\usepackage{physics}
\usepackage{appendix}
\usepackage{authblk}
\usepackage[nameinlink,capitalize]{cleveref}  
\usepackage{aliascnt}
\newcommand{\eps}{\varepsilon}
\newcommand{\dtv}{{\rm d_{TV}}}
\renewcommand{\abs}[1]{\lvert #1 \rvert}

\theoremstyle{plain}
\newtheorem{theorem}{Theorem}[section]
\newaliascnt{lemma}{theorem}
\newtheorem{lemma}[lemma]{Lemma}
\aliascntresetthe{lemma}
\newaliascnt{corollary}{theorem}

\aliascntresetthe{corollary}
\newaliascnt{proposition}{theorem}

\aliascntresetthe{proposition}
\newaliascnt{definition}{theorem}
\newtheorem{definition}[definition]{Definition}
\aliascntresetthe{definition}

\newcommand\emails[1]{\begingroup
\renewcommand\thefootnote{}\footnote{#1}
\addtocounter{footnote}{-1}\endgroup}

\def\ep{\varepsilon}

\def\tr{{\rm tr}}

\def\vev#1{\langle{#1}\rangle}

\def\and{\quad {\rm and} \quad}

\def\CC{{\cal C}}

\def\nn{\nonumber\\}

\def\ketbra#1{ |{#1}\rangle\!\langle{#1}| }

\DeclareMathOperator*{\Ex}{\mathbb{E}}

\def\SU{\mathrm{SU}}

\allowdisplaybreaks

\title{Local random quantum circuits converge to the Porter-Thomas distribution in polynomial depth}

\author[1]{Aniruddha Sen}
\author[2,1]{Nicholas Hunter-Jones}

\affil[1]{Department of Computer Science, University of Texas at Austin, Austin, TX}
\affil[2]{Department of Physics, University of Texas at Austin, Austin, TX}

\date{} 

\begin{document}

\maketitle

\thispagestyle{empty}

\begin{abstract}
Porter-Thomas statistics are a characteristic feature of the output distribution of random quantum states and, more broadly, chaotic quantum many-body systems. Convergence to Porter-Thomas plays a central role in random circuit sampling and experimental demonstrations of quantum advantage, where the output statistics of low-depth random quantum circuits are expected to be approximately Porter-Thomas, despite the absence of a rigorous proof of convergence. We show that the output distribution of polynomial-depth brickwork random circuits converges inverse-polynomially in total variation distance to the Porter-Thomas distribution. Specifically, consider the output probability distribution over a fixed bitstring of a local random quantum circuit, constructed from nearest-neighbor Haar random gates. Then, for any $m \geq 0$, the distribution corresponding to circuits of depth $O(n^{2m+1}\log(n))$ is at most $O(1/n^m)$ far in total variation distance from the Porter-Thomas distribution. Our proof uses moment bounds from approximate designs, analytic estimates for characteristic functions, and a local anticoncentration property for inverse moments. 
\end{abstract}

\emails{\hspace*{-2mm} Email: \href{mailto:aniruddhasen@utexas.edu}{\tt aniruddhasen@utexas.edu}, \href{mailto:nickrhj@utexas.edu}{\tt nickrhj@utexas.edu}}

\section{Introduction}
The goal of quantum advantage is to demonstrate that there are computational tasks that quantum processors can perform more efficiently than classical computers. Random circuit sampling has been studied extensively as an important task to establish a near-term quantum advantage, with both complexity-theoretic and empirical evidence of its hardness for classical computers \cite{boixo2018characterizing,aaronson2016complexity,bouland2019complexity,arute2019quantum,hangleiter2023computational,gao2025establishing}. The task is defined as follows. Given as input a classical description of a random quantum circuit $U$ on $n$ qubits, return samples from its \emph{output} distribution, which is the probability distribution over bitstrings corresponding to the state $U\ket{0^n}$. This is straightforward when given access to a quantum computer, but is widely believed to be a hard problem for classical computers. Quantum supremacy experiments have been based on this task, including experimental demonstrations over the past few years on 50 to 100 noisy qubits \cite{arute2019quantum,wu2021strong,zhu2021quantum,morvan2023phase,decross2025comp,gao2025establishing,abanin2025edge}. In order to \emph{classically verify} the output of the quantum computation, the Linear Cross Entropy Benchmark (LXEB) test has been the standard technique for certifying random circuit sampling \cite{boixo2018characterizing}. We know that the output distribution of Haar random unitaries on $n$ qubits passes this test \cite{kliesch2021theory,hangleiter2023computational}. In particular, we know such output distributions are approximately \emph{Porter-Thomas} distributed. But in practice, we only expect to be able to implement linear or polynomial depth random circuits. This motivates the following question, which had remained open in any rigorous sense despite strong numerical evidence \cite{boixo2018characterizing}.
\begin{center}
    \emph{Is the output distribution of polynomial depth random circuits \\ close to the Porter-Thomas distribution?}
\end{center}
In this work, we answer this question  in the affirmative in a natural setting, by showing that the output distribution of \emph{brickwork} random circuits indeed converges to the Porter-Thomas distribution in polynomial depth, in total variation (or TV) distance.

This result strengthens the LXEB test as a benchmark for random circuit sampling, while also helping us better understand the properties of polynomial depth random quantum circuits. Formally, we prove the following theorem.
\begin{theorem}\label{thm:combined-tv-intro}
Let $b \in \{0,1\}^n$ be a fixed bitstring of length $n$, and let 
\begin{equation}
P\coloneq p_U(b)=\left|\vev{b|U|0}\right|^2 \,,\quad U \sim \mathcal{C}(l) \,,
\end{equation}
be the probability distribution of measuring $b$, where $U$ is sampled from a random brickwork circuit ensemble $\mathcal{C}(l)$ of depth $l$ such that it forms a $d^{-k}$-approximate $k$-design, with $k \ge 2$ and $d=2^n$. Let $Q$ be the corresponding probability distribution where $U$ is sampled from the Haar measure over $\SU(d)$. Then,
\begin{equation}
\label{eq:tv-Haar}
\dtv(P,Q) \le O\left(\sqrt{\frac{\log(k)}{k}}\right) \,.
\end{equation}
In particular, if $k=\Theta(n^{2m}\log(n))$, which implies $l = \Theta(n^{2m+1}\log(n))$ by \cref{thm:brickwork-design}, then
\begin{equation}
    \dtv(P,Q) \le O(1/n^m).
\end{equation}
\end{theorem}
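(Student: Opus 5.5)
We compare the distributions of the rescaled random variables $X = dP$ and $Y = dQ$; total variation distance is invariant under this rescaling. Under the Haar measure, $Y$ has the Beta$(1,d-1)$ law scaled by $d$, which is close to $\mathrm{Exp}(1)$ (Porter-Thomas) and has a smooth, explicit density. Its moments are $\mathbb{E}[Y^j] = d^j\, j!\,(d-1)!/(d+j-1)!$.

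\textbf{Moment matching.} The approximate-design hypothesis says the $k$-fold moment operators of $\mathcal{C}(l)$ and Haar differ by a multiplicative factor $(1\pm d^{-k})$ in the sense of \cref{thm:brickwork-design}. Applying this to the observable $(\ketbra{b})^{\otimes j}$ for $j\le k$ gives
\[
\bigl|\mathbb{E}[X^j]-\mathbb{E}[Y^j]\bigr| \le d^{-k}\,\mathbb{E}[Y^j] \le d^{-k} j!.
\]
So the first $k$ moments agree up to exponentially small error.

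\textbf{Characteristic functions.} Let $\phi_X(t)$ and $\phi_Y(t)$ be the characteristic functions of $X$ and $Y$. Taylor-expanding $e^{itx}$ to order $k$ with the Lagrange remainder gives
\[
|\phi_X(t)-\phi_Y(t)| \le \sum_{j<k}\frac{|t|^j}{j!}\,\bigl|\mathbb{E}X^j-\mathbb{E}Y^j\bigr| + \frac{|t|^k}{k!}\bigl(\mathbb{E}X^k+\mathbb{E}Y^k\bigr) \lesssim 2|t|^k + d^{-k}e^{|t|}.
\]
Hence the characteristic functions agree to within $O(\epsilon)$ for $|t|\le T \approx \epsilon^{1/k}$. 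This window is only of constant size, so moments alone give Kolmogorov (CDF) distance of order $1/T$ via Esseen's smoothing inequality, which is far too weak. We therefore need more.

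\textbf{Smoothing and local anticoncentration.} A better route is to compare $X$ and $Y$ through a smoothing kernel whose Fourier transform is supported in $[-T,T]$. Convolve both laws with a kernel $K_\sigma$ of width $\sigma$. Fourier inversion together with Taylor-controlled moments shows that the smoothed densities differ by little. To convert this into TV distance, we bound three terms:
\begin{enumerate}
\item $\dtv(Y, Y*K_\sigma)=O(\sigma)$, since $Y$ has a nearly exponential, Lipschitz density.
\item The smoothing error for $X$, i.e.\ $\dtv(X, X*K_\sigma)$.
\item The distance between the two smoothed laws.
\end{enumerate}
Term 2 requires regularity of the law of $X$. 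This is where the local anticoncentration for inverse moments comes in: bounding $\mathbb{E}[X^{-s}]$ for some $s<1$ via the design property (the small-$P$ tail) controls the mass of $X$ near $0$. I would also try to express $X$ as a smooth function of Haar gates, for instance by conditioning on all but the last gate layer, so that its conditional law has a bounded density.

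\textbf{Choice of parameters.} Concretely, I would approximate the indicator of a set $A$ by a polynomial of degree $k$ on $[0,L]$ with $L\approx\log k$, using that the exponential tail beyond $L$ costs $e^{-L}$ (and the $k$-th moments of $X$ control its tail in the same way). Polynomial approximation of a smoothed indicator with smoothing scale $\sigma$ on $[0,L]$ requires degree roughly $L/\sigma^2$. Balancing against the $O(\sigma)$ smoothing error gives $\sigma\approx\sqrt{\log k/k}$. This matches the claimed rate $O(\sqrt{\log k/k})$, which is reassuring.

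\textbf{Main obstacle.} The hard step is controlling $X$ near the singular points, small values of $P$, where a polynomial cannot resolve the indicator. This is exactly where the inverse-moment anticoncentration is needed. The final corollary is immediate: with $k=\Theta(n^{2m}\log n)$ we get $\sqrt{\log k/k}=O(n^{-m})$, and \cref{thm:brickwork-design} gives the depth $l=\Theta(nk)=\Theta(n^{2m+1}\log n)$.
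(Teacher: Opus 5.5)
Your three-term decomposition is the paper's own ($\dtv(f,g)\le\dtv(f_\ast,g_\ast)+2\eps'$). However, both nontrivial terms are asserted rather than proved, so the proposal has genuine gaps.

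\textbf{Term 3 (comparing the smoothed laws).} By your own Taylor estimate, $|\phi_X(t)-\phi_Y(t)|$ is small only for $|t|\lesssim 1$, which is the paper's $|t|\le d/2$. A kernel of width $\sigma\approx\sqrt{\log k/k}$, by contrast, has Fourier support out to $|t|\approx 1/\sigma$. On the band $1\lesssim|t|\lesssim 1/\sigma$ the order-$k$ remainder $|t|^k$ is enormous. So Fourier inversion with Taylor-controlled moments says nothing there.

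This band is exactly where the paper needs a separate analytic idea. The truncated Laplace transform $H(s)=\int_0^{2k/d}(f-g)(x)e^{-sx}\,dx$ is $2^{-\Omega(k)}$ on the disk $|s|\le s_0 d$ (\cref{lem:fourier-distance-from-moments}). It is $O(1)$ on the half-plane $\operatorname{Re}(s)\ge -s_1 d$, by exponential tail bounds derived from the moments. Mapping the half-plane to the unit disk and applying the three-circle theorem gives $|\hat z(t)|\lesssim 2^{-ck/(1+(t/d)^2)}$. This $e^{-\Omega(kh^2)}$ decay at $|t|\approx d/h$, weighed against the prefactor $L=2k/d$ of \cref{cor:bobkov-supported}, is what produces $h\approx\sqrt{\log k/k}$.

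Your polynomial-approximation alternative does not fill this gap as written. The degree count $L/\sigma^2$ is not derived. Moreover, a degree-$k$ polynomial bounded on $[0,L]$ with $L\approx\log k$ can grow like $(4x/L)^k$ for $x>L$. Its contribution from $X>L$ is then only bounded by roughly $(4/L)^k\,\mathbb{E}[X^k]\approx(4k/(eL))^k$, not by the tail probability $e^{-L}$. You would need a weighted approximation scheme with coefficient bounds against the moment errors, uniform over Borel sets $A$. As it stands, the agreement of your parameter count with $\sqrt{\log k/k}$ is heuristic.

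\textbf{Term 2 (regularity of the law of $X$).} What is needed is that the density has bounded variation, $\|f_X'\|_{L^1}=O(1)$ (the paper's $\|f'\|_{L^1}=O(d)$). A bound on $\mathbb{E}[X^{-s}]$ controls only the mass near $0$, not oscillations of the density. It also cannot come from the design property at all. The first $k$ moments are matched exactly by discrete (Gauss quadrature) measures, and moving a tiny mass to near $0$ changes them negligibly while making $\mathbb{E}[X^{-s}]$ arbitrarily large.

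The paper gets regularity from the circuit structure. Conditioning on all but the final gate gives $P=RB$ with $B\sim\mathrm{Beta}(1,3)$, where $R$ is the mass entering the final gate. Hence $f$ is monotone and $\|f'\|_{L^1}=3\mathbb{E}[R^{-1}]$. Note that the inverse first moment of $P$ itself is infinite even for Porter--Thomas, so it must be taken of $R$.

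The paper then proves $\mathbb{E}[R^{-1}]=O(d)$ in two steps. It iterates the two-gate $\mathrm{Beta}(2,2)$ lemma to get $\mathbb{E}[(R_m/R)^{3/2}]\le(r_m/4)^3$ on growing windows. It combines this with design-based concentration of the window mass $R_m$ around $r_m/d$, giving $\Pr[R\le t/d]\lesssim t^{12/11}$. Your suggestion to condition on the last layer points in this direction but supplies none of this.

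\textbf{Minor.} The design in \cref{thm:brickwork-design} is additive, so the moment errors for $X=dP$ are $d^{j-k}$ rather than $d^{-k}j!$. This is harmless.
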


We build upon prior work showing local random quantum circuits form approximate unitary $k$-designs in polynomial depth \cite{brandao2016local,haferkamp2021improved,haferkamp2022random,harrow2018approx,chen2406incompressibility,baer2026random}. Indeed, this design property of random circuits was recently used to show that linear depth random circuits pass the LXEB test \cite{hunter2026ideal}. However, their result did not imply closeness to the Porter-Thomas distribution. This can be explained by the fact that their proof only used limited information about the output distribution of random circuits, which is that a finite number of their moments approximately match the moments of the Porter-Thomas distribution. We can relate this to a more general problem, known as the Hausdorff moment problem \cite{hausdorff1923momentprobleme,akhiezer2020classical} in mathematics, which motivates some of our techniques. The problem asks if a distribution compactly supported in $[0,1]$, corresponding to a given sequence of moments, exists under some conditions. It is known that matching any finite number of moments is insufficient to establish  closeness of distributions, even if they are required to be \emph{well-behaved} in some sense \cite{tardella2001note,tagliani2003note}. So, even if information about moments is sufficient to establish properties like anticoncentration \cite{dalzell2022random} and bounded support of the output distribution \cite{hunter2026ideal}, this presents a barrier to such techniques for showing two distributions are close in a distance measure such as TV distance. This motivates us to specifically analyze brickwork random circuits, using knowledge of the per-gate distribution and circuit architecture in our proof, along with moments. However, we expect our techniques to generalize to other common circuit ensembles that form designs as well.

We briefly discuss some relevant work that has studied properties of the output distribution of random circuits. If we consider the distribution over \emph{bitstrings} of a fixed random circuit (instead of a fixed bitstring over random circuits), then we know that \emph{learning} this distribution in the statistical query model is hard, even at super logarithmic depth. Along with this result, Ref.~\cite{nietner2025average} also shows that the output distribution of a brickwork random quantum circuit is constantly far from any fixed distribution in TV distance with probability exponentially close to one. 

In another line of work, the LXEB test from random circuit sampling has been used in certified randomness protocols \cite{aaronson2023certified, bassirian2026certified} as well, by showing that the output distribution passing the test contains at least $\Omega(n)$ bits of entropy. A version of the protocol was recently demonstrated experimentally \cite{liu2025certified}.
Porter-Thomas statistics also play a key role in recent approaches to benchmarking and characterizing errors from random circuit sampling data \cite{manole2025learn}.

The emergence of Porter–Thomas statistics has also been studied analytically in dual-unitary models \cite{claeys2024fock}, Brownian circuits \cite{bentsen2024comp}, and as a signature for chaotic quantum many-body dynamics \cite{mark2022bench,mark2024max}.

\subsection{Technical Overview}
Our techniques build on a simple observation, which is that the TV distance between two compactly supported probability distributions can be roughly bounded by the $L^2$ norm of the distance between their \emph{characteristic functions}, or Fourier transforms. This is formally stated in \cref{thm:brickwork-bounded}. Let $f$ be the distribution corresponding to polynomial depth random circuits that forms a $k$-design, and let $g$ be the Porter-Thomas distribution. Notice that $f$ and $g$ are supported in the domain $[0,1]$, but their characteristic functions $\hat{f}$ and $\hat{g}$ are supported over all of the complex space $\mathbb{C}$. For intuition, we can just consider them to be supported over the real line $\mathbb{R}$ for now. It suffices to bound the squared absolute difference of their Fourier masses, $\abs{\hat{f}-\hat{g}}^2$, over $\mathbb{R}$. We divide the real line into three parts and bound each part using separate techniques, as explained below. For a more detailed proof, we refer the reader to \cref{section:combined-bound}.

For the first bound, consider expanding the characteristic functions as a Taylor series. This allows us to express the TV distance between the distributions in terms of the differences between their moments. Since $f$ corresponds to a $k$-design, the first $k+1$ moments of $f$ and $g$ are close. This allows us to control the TV distance between them in a \emph{bounded} domain in Fourier space, roughly up to $O(d)$ on the real line, but not entirely, since we only control a finite number of moments. In particular, we prove the following lemma. 
\begin{lemma}[Informal version of \cref{lem:moment-matching-low-frequency}]
\label{lemma:inf-low-moments}
Given the probability density functions $f,g$ as described above, define $z \coloneq f-g$, and let its characteristic function be $\hat{z}$. Then,
\begin{equation}
    \int_{\abs{t}\le d/2}\abs{\hat{z}(t)}^2 \, dt \le  O\left(\frac{d}{k2^{2k}}\right)\,.
\end{equation}
\end{lemma}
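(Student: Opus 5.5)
We expand the characteristic function of $z$ as a power series and bound it term by term using the moment-matching property of approximate designs. Write $\hat{z}(t)=\int_0^1 e^{itx} z(x)\,dx=\sum_{j\ge 0}\frac{(it)^j}{j!}\,\Delta_j$, where $\Delta_j \coloneq \Ex_f[P^j]-\Ex_g[P^j]$ is the difference of $j$-th moments. The series converges absolutely because both densities are supported in $[0,1]$. We have $\Delta_0=0$. We split the sum into the low part $j\le k$ and the tail $j>k$, and bound each separately on $\abs{t}\le d/2$.

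\textbf{Low moments.} Since $p_U(b)^j=\tr\big[(\ketbra{b})^{\otimes j}U^{\otimes j}(\ketbra{0})^{\otimes j}U^{\dagger\otimes j}\big]$, the $d^{-k}$-approximate $k$-design property controls $\abs{\Delta_j}$ for $j\le k$. Approximate $k$-designs are also approximate $j$-designs for $j\le k$, via partial trace. The Haar moments are $\Ex_g[P^j]=\frac{j!\,(d-1)!}{(d+j-1)!}\approx j!/d^j$. I would use the relative-error (multiplicative) notion of design, which gives $\abs{\Delta_j}\le d^{-k}\,\Ex_g[P^j]\le d^{-k} j!/d^j$. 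The low part is then bounded by
\[
\sum_{j=1}^{k}\frac{\abs{t}^j}{j!}\,d^{-k}\frac{j!}{d^j}\le d^{-k}\sum_{j\ge1}(\abs{t}/d)^j\le d^{-k}
\]
for $\abs{t}\le d/2$. Squared and integrated over an interval of length $d$, this contributes $O(d^{1-2k})$, which is negligible.

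\textbf{Tail.} For $j\ge k+1$ the moments are no longer matched, so I would bound them individually. The Porter–Thomas side is explicit: $\Ex_g[P^j]\le j!/d^j$. For $f$, I would use monotonicity of moments of a $[0,1]$-valued variable. Applying H\"older in the form $P^j\le P^{k}$ is too lossy. Instead I would use $\Ex_f[P^j]\le \Ex_f[P^{k}]\cdot \sup P^{j-k}\le \Ex_f[P^{k}]$, together with the design bound $\Ex_f[P^k]\le(1+d^{-k})k!/d^k$. This gives
\[
\sum_{j>k}\frac{(d/2)^j}{j!}\Big(\frac{2k!}{d^k}+\frac{j!}{d^j}\Big).
\]
The second piece is a geometric series $\sum_{j>k}2^{-j}=2^{-k}$. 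The first piece is $\frac{2k!}{d^k}\sum_{j>k}\frac{(d/2)^j}{j!}$, and this is the main obstacle: for $k\ll d$ the exponential tail $\sum_{j>k}(d/2)^j/j!$ is not small, since it is essentially $e^{d/2}$.

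To handle this I would instead group by the Taylor remainder of $e^{itx}$ against the (truncated) measure. Write $\hat z(t)=\sum_{j\le k}\frac{(it)^j}{j!}\Delta_j+R_k(t)$ with $\abs{R_k(t)}\le \frac{\abs{t}^{k+1}}{(k+1)!}\,\Ex_{f+g}[P^{k+1}]$. The $(k+1)$-st moment of $f$ is then controlled by the design, through a $(k+1)$-moment bound. Alternatively, one loses a factor and uses $\Ex_f[P^{k+1}]\le\Ex_f[P^k]\le 2k!/d^k$. The remainder becomes
\[
\abs{R_k(t)}\lesssim \frac{(d/2)^{k+1}}{(k+1)!}\cdot\frac{k!}{d^{k}}\cdot\frac{1}{d}\cdot O(1)\cdot d\lesssim \frac{d\,2^{-k}}{k+1},
\]
or better if the $(k+1)$-moment is used directly. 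Pinning the powers of $d$ correctly here is exactly where I expect care to be needed, and I would aim to get $\abs{R_k(t)}=O(2^{-k}/\sqrt{k})$. Squaring and integrating over $\abs{t}\le d/2$ then yields $O\!\big(d/(k\,2^{2k})\big)$, which dominates the $O(d^{1-2k})$ low-moment contribution and matches the claimed bound.
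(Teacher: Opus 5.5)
There is a genuine gap, and it is in the remainder step, which is the main content of the lemma. You expand through order $k$ and bound the order-$(k+1)$ remainder by $\frac{\abs{t}^{k+1}}{(k+1)!}\Ex_{f+g}[P^{k+1}]$. The hypothesis is a $k$-design, i.e.\ control of $m_0,\dots,m_k$ only, and it says nothing about $m_{k+1}$. So ``controlled by the design through a $(k+1)$-moment bound'' is not available.

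Your fallback $m_{k+1}\le m_k\le 3k!/d^k$ loses a factor of order $d/k$ relative to the true size $(k+1)!/d^{k+1}$. It gives $\abs{R(t)}\le 3\abs{t}^{k+1}/\bigl((k+1)d^k\bigr)$, which is about $d\,2^{-k}/k$ near $\abs{t}=d/2$, as your own display shows. Squaring and integrating over $\abs{t}\le d/2$ then yields $O\bigl(d^3/(k^3 4^k)\bigr)$, which misses the claim by a factor $d^2/k^2$. The target $\abs{R}=O(2^{-k}/\sqrt{k})$ is stated as an aim, with no argument behind it.

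The missing idea is to stop one order earlier. Write $\hat z(t)=\sum_{r=0}^{k-1}\frac{(it)^r}{r!}\Delta m_r+R_k(t)-R_k'(t)$ and use $\bigl|e^{iu}-\sum_{r<k}(iu)^r/r!\bigr|\le\abs{u}^k/k!$. Then the remainders involve only $m_k+m_k'\le 3k!/d^k$, which is the top moment the design does control. This gives
\[
\abs{\hat z(t)}\le \eps\sum_{r=0}^{k-1}\frac{\abs{t}^r}{r!}+3\Bigl(\frac{\abs{t}}{d}\Bigr)^{k}.
\]
The factor $1/k$ in the claim comes from integration, not from any pointwise gain: $\int_{-d/2}^{d/2}(\abs{t}/d)^{2k}\,dt=d/\bigl((2k+1)4^k\bigr)$. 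Pointwise at $\abs{t}=d/2$ one only has the bound $3\cdot 2^{-k}$. This is exactly the paper's proof.

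Separately, you are not given a relative-error design, and you do not need one. With the additive bound $\abs{\Delta m_r}\le\eps\le d^{-k}$, the first sum is at most $\eps e^{\abs{t}}$ for $\abs{t}\le 2k$. For $2k<\abs{t}\le d/2$ it is at most $2\eps\abs{t}^{k-1}/(k-1)!$. Both contributions are of lower order after squaring and integrating.
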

What additional information would higher moments have given us? In some sense, higher moments just control the \emph{smoothness} or regularity of a distribution. To this effect, we use the technique of \emph{mollification} to show that there exist distributions $f_{\ast}$ and $g_{\ast}$, close in TV distance to $f$ and $g$ respectively, that are sufficiently smooth. On these mollified distributions, we show that the Fourier mass goes to zero above a certain threshold as proven in the following theorem.
\begin{theorem}[Informal version of \cref{thm:mollification}]
\label{thm:inf-mollification}
Given the probability distribution functions $f,g$ as described above, assume $\|f'\|_{L^1} = ad$, where $d=2^n$ and $a \in \mathbb R$ is a constant. Then, for every integer $k\ge 1$ and every $0<h<1$, there exists a probability density function $f_\ast \in C^\infty$ such that
\begin{equation}
d_{\mathrm{TV}}(f,f_\ast)\le h \,,    
\end{equation}
and the corresponding characteristic function $\widehat{f_\ast}(s)=0$ for all $\abs{s} > a'd/h$ for some constant $a' > a > 0$.
\end{theorem}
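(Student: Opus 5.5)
The idea is to obtain $f_\ast$ by convolving $f$ with a \emph{nonnegative, band-limited} mollifier. A convolution of two probability densities is again a probability density. Its characteristic function is the product of the two characteristic functions, so it vanishes wherever the mollifier's does. The $L^1$ cost of the smoothing is controlled by $\|f'\|_{L^1}$ times the first absolute moment of the mollifier.

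\textbf{Choice of kernel.} Let
\begin{equation}
\varphi(x) = c\left(\frac{\sin x}{x}\right)^4, \qquad c = \left(\int_{\mathbb R}\left(\frac{\sin x}{x}\right)^4 dx\right)^{-1}.
\end{equation}
This kernel has three properties.
\begin{itemize}
\item It is nonnegative, integrates to one, and is entire, hence $C^\infty$.
\item The Fourier transform of $\sin x/x$ is a multiple of the indicator of $[-1,1]$. Hence $\widehat{\varphi}$ is a fourfold convolution of such indicators and is supported in $[-4,4]$ (up to the $2\pi$ normalization convention).
\item Because $\varphi(x)=O(|x|^{-4})$, its first absolute moment $M_1=\int|y|\varphi(y)\,dy$ is finite.
\end{itemize}
The Fejér kernel $(\sin x/x)^2$ would not work here: its first moment diverges. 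This is why the fourth power is used, and choosing such a kernel is the only delicate point of the construction.

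\textbf{Construction and Fourier support.} For $\epsilon>0$ set $\varphi_\epsilon(x)=\epsilon^{-1}\varphi(x/\epsilon)$ and define $f_\ast=f*\varphi_\epsilon$. Then $f_\ast\ge 0$, $\int f_\ast=1$, and $f_\ast\in C^\infty$, since derivatives fall on $\varphi_\epsilon$, all of whose derivatives are bounded and integrable. Moreover $\widehat{f_\ast}(s)=\widehat f(s)\,\widehat\varphi(\epsilon s)$, which vanishes for $|s|>4/\epsilon$.

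\textbf{TV bound.} By Minkowski's integral inequality,
\begin{equation}
\|f_\ast-f\|_{L^1}\le\int\varphi_\epsilon(y)\,\|f(\cdot-y)-f\|_{L^1}\,dy .
\end{equation}
We then use the translation estimate $\|f(\cdot-y)-f\|_{L^1}\le |y|\,\|f'\|_{L^1}$. For $f\in W^{1,1}$ this follows from $f(x-y)-f(x)=-\int_0^y f'(x-t)\,dt$ and Fubini. If $f$ has jump discontinuities, for instance at the endpoints $0,1$ of its support, the estimate still holds with $\|f'\|_{L^1}$ read as the total variation of $f$ on $\mathbb R$; this is how the hypothesis should be interpreted. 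Combining the two displays gives
\begin{equation}
\|f_\ast-f\|_{L^1}\le \|f'\|_{L^1}\int|y|\varphi_\epsilon(y)\,dy = ad\,\epsilon M_1,
\end{equation}
and therefore $d_{\mathrm{TV}}(f,f_\ast)\le \tfrac12 ad\,\epsilon M_1$.

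\textbf{Choice of parameters.} Take $\epsilon=2h/(adM_1)$, so that $d_{\mathrm{TV}}(f,f_\ast)\le h$. The Fourier cutoff is then $4/\epsilon = 2aM_1 d/h$. We may therefore take $a'=\max(2M_1,2)\,a$, which is strictly greater than $a$ and satisfies $a' d/h\ge 4/\epsilon$. Hence $\widehat{f_\ast}(s)=0$ for $|s|>a'd/h$.

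The integer $k$ plays no role in this construction. The same argument applies verbatim to $g$ to produce $g_\ast$.
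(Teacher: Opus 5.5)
Your proposal is correct and follows the paper's proof essentially step for step. Both use the same nonnegative band-limited kernel $\frac{3}{2\pi}(\sin x/x)^4$, which has first absolute moment $M_1=3\ln 2/\pi$ and Fourier support in $[-4,4]$. Both then apply the same translation estimate to get $d_{\mathrm{TV}}\le \tfrac12\epsilon M_1\|f'\|_{L^1}$, and make the same scaling choice giving $a'=2aM_1$; your $\max$ only enlarges $a'$ and is harmless.

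One point where you are more careful than the paper: you bound the $L^1$ error over all of $\mathbb{R}$ and read $\|f'\|_{L^1}$ as the total variation of $f$, which accounts for the jump of $f$ at $0$. The paper only bounds $\|f_h-f\|_{L^1(0,1)}$.
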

Thus, it suffices to bound the TV distance between $f_\ast$ and $g_\ast$. Notice that this theorem assumes $\|f'\|_{L^1} = O(d)$, which is a strict quantitative bound on smoothness we require in order to apply mollification. This is mainly proven in \cref{thm:inverse-moment}, following a reduction from \cref{lem:reduction-TV-to-R}, which reduces bounding $\|f'\|_{L^1}$ to a small ball anticoncentration bound over the probability mass going into a fixed single gate in the last layer of a brickwork random circuit. 

Finally, there is one more technical detail we need to account for. Notice that \cref{thm:inf-mollification} bounds the Fourier mass $\hat{f_\ast}$ only beyond $d/h$ for some $0 < h < 1$. Compare this with the bound up to $d/2$ from \cref{lemma:inf-low-moments}. These domains do not overlap, so \textit{a priori} the Fourier mass may become unbounded in the gap. We use techniques from complex analysis, in particular the Hadamard three-circle theorem (\cref{thm:three-circle}), in order to show that the Fourier mass remains controlled if it is sufficiently well bounded from both sides.

\begin{lemma}[Informal version of \cref{lem:intermediate-fourier-band}]
Given the probability distribution functions $f,g$ as described above, define $z \coloneq f-g$, and let its characteristic function be $\hat{z}$. Then, for any $0 < h \le 1$,
\begin{equation}
    \int_{d/2\le\abs{t}\le d/h} \abs{\hat{z}(t)}^2 \, dt \le \frac{de^{\Omega(-kh^2)}}{h} \,.
\end{equation}
\end{lemma}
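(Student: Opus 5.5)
We bound $\hat z$ on the intermediate band by interpolating, via the Hadamard three-circle theorem (\cref{thm:three-circle}), between two estimates for the entire function $\hat z(s)=\int_0^1 e^{isx}z(x)\,dx$, which is entire because $z$ is supported in $[0,1]$. The first estimate is a \emph{small-radius} bound. Expanding in a Taylor series,
\begin{equation}
\hat z(s)=\sum_{j\ge 0}\frac{(is)^j}{j!}\bigl(\mathbb{E}_f[x^j]-\mathbb{E}_g[x^j]\bigr).
\end{equation}
The $k$-design property controls the terms with $j\le k$. For $j>k$, the moments of $f$ and $g$ are each at most about $j!/d^j$, up to the additive design error; this follows from monotonicity of moments on $[0,1]$ and the Porter-Thomas moments $\binom{d+j-1}{j}^{-1}$. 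Together these give, on the circle $|s|=r_1$ with $r_1=c\,d$ (for instance $c=1/2$ or slightly larger), the bound $|\hat z(s)|\le e^{-\Omega(k)}$. This is the pointwise, complex-plane version of the computation behind \cref{lem:moment-matching-low-frequency}.

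The second estimate is a \emph{crude large-radius} bound. For any $s$ we have $|\hat z(s)|\le \int_0^1 e^{|\Im s|x}|z(x)|\,dx\le 2e^{|s|}$. This is too weak on circles of radius $\gg d$. The fix is to rescale: work with $w=s/d$, so that $F(w)=\hat z(dw)$ satisfies $|F(w)|\le \int_0^1 e^{d|\Im w| x}|z(x)|dx$. Since both $f$ and $g$ place mass essentially at scale $x\sim 1/d$ (Porter-Thomas has exponential tails $e^{-dx}$, and the design moments give the same decay up to order $k$), this is bounded by $O(1)$ for $|\Im w|\le 1-\delta$. On a circle of radius $R=O(1/h)$, however, we only get $e^{O(R)}$.

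The key step is the three-circle interpolation. Take radii $r_1=1/2\le r_2=1/h\le r_3$ in the $w$-variable, with $M(r_1)\le e^{-\Omega(k)}$ and $\log M(r_3)=O(r_3)$. Hadamard gives
\begin{equation}
\log M(r_2)\le \frac{\log(r_3/r_2)}{\log(r_3/r_1)}\log M(r_1)+\frac{\log(r_2/r_1)}{\log(r_3/r_1)}\log M(r_3).
\end{equation}
The weight on the good circle is only logarithmic in $h$, whereas the target exponent is $kh^2$. So I expect a more careful choice: tailor the small circle to where the moment bound still holds, taking $r_1$ as large as $\Theta(k)$ rather than $1/2$, since the Taylor tail $\sum_{j>k}(r_1)^j$ is controlled by $(er_1/k)^k$. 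Then take $r_3\approx r_1+\Theta(1/h)$, or use an annulus or strip version such as Phragmén–Lindelöf on a strip around the real axis, where the crude bound is $O(1)$. With those choices the interpolation should produce a bound of the form $e^{-\Omega(kh^2)}$ at real $t$ with $|t|\le d/h$. Integrating the pointwise bound over a band of length $O(d/h)$ then gives the claimed $\frac{d}{h}e^{-\Omega(kh^2)}$.

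The main obstacle is the interpolation itself: making the geometry of the three circles (or the strip) yield exponent $kh^2$ rather than something that degrades logarithmically. I would first try the strip version. On the strip $|\Im w|\le 1/2$, $F$ is uniformly bounded. On the segment $|w|\le\Theta(k)$ of the real axis it is tiny. A harmonic-measure estimate for the point $w=1/h$ should then give the $e^{-\Omega(kh^2)}$ decay, with $h^2$ arising from the quadratic behaviour of harmonic measure near the boundary of the good set.
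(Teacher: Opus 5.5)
Your instinct to interpolate with the three-circle theorem is the right one, but there is a genuine gap. Your small-circle estimate and your strip bound both fail for the untruncated transform. Your proposed enlargement of the good set is false. And the strip geometry cannot produce the exponent $kh^2$.

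\textbf{Truncation is needed.} With only moments up to order $k$ controlled, monotonicity on $[0,1]$ gives $m_j\le m_k$ for $j>k$, not $m_j\lesssim j!/d^j$. Likewise $\int_0^1 e^{cdx}f(x)\,dx$ is uncontrolled: nothing excludes a mass of order $d^{-k-1}$ near $x=1$, which contributes about $d^{-k-1}e^{cd}$. The paper therefore truncates, setting $H(s)=\int_0^{L}z(x)e^{-sx}\,dx$ with $L=2k/d$. On the real axis this differs from $\hat z$ by at most the tail mass $3\cdot 2^{-k}$. On $[0,L]$ the inequality $x^j\le L^{j-k}x^k$ yields $|H(s)|\le 2^{-c_0k}$ for $|s|\le d/8$ (\cref{lem:fourier-distance-from-moments}).

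\textbf{The good set cannot be enlarged.} Smallness cannot be pushed to $|w|=\Theta(k)$. For $|w|>1$ the remainder $|s|^k(m_k+m_k')/k!\approx 3|w|^k$ is exponentially large, and the unmatched higher moments give no cancellation. Indeed, if $F$ really were tiny on the real segment $|w|\le\Theta(k)$, the lemma would follow for $h\gtrsim 1/k$ with no interpolation at all. The rate $e^{-\Omega(kh^2)}$ reflects that $\hat z$ need not be small once $|t|\gtrsim d\sqrt{k}$.

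\textbf{The missing idea: a half-plane, not a strip.} After truncation, the transform is bounded on a whole \emph{half-plane}. For $x\ge 0$, $|e^{isx}|=e^{-x\operatorname{Im}s}\le 1$ whenever $\operatorname{Im}s\ge 0$; your bound $e^{|\operatorname{Im}s|x}$ discards this one-sidedness. The tail bound $\Pr[X\ge x]\le e^{-xd\ln 2/2}$ for $x\le L$ then extends boundedness to $\operatorname{Im}s\ge -s_1d$.

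This choice of region is decisive. In a strip of width $O(1)$, the harmonic measure of an $O(1)$-size good set decays like $e^{-\pi|w|}$, so a strip argument gives only $\exp(-\Omega(ke^{-O(1/h)}))$. In a half-plane the Poisson kernel decays polynomially. The real frequency axis runs parallel to the boundary at distance $\Theta(d)$, so the good disk has harmonic measure $\Theta((d/t)^2)$ as seen from $t$. That is the source of $h^2$.

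The paper implements this in its own convention, where the half-plane is $\operatorname{Re}s\ge -s_1d$. It uses the M\"obius map $\phi(s)=s/(s+2s_1d)$ onto the unit disk. There $\overline H=H\circ\phi^{-1}$ is bounded, and it is $2^{-c_0'k}$-small on $|w|\le r_0$. Hadamard on concentric circles gives the exponent $\theta=\log(1/|w|)/\log(1/r_0)$. Finally, $\log(1/|\phi(-it)|)=\tfrac12\log(1+4s_1^2d^2/t^2)\gtrsim 1/(1+(t/d)^2)\ge h^2/2$ for $|t|\le d/h$.
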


These three bounds over different regions of the Fourier space are combined with a certain setting of parameters in \cref{thm:combined-tv} in order to get our final bound on the TV distance between $f$ and $g$.

We note that our techniques appear to not extend to obtaining \emph{inverse-exponential} bounds on TV distance in polynomial depth, even under stronger smoothness assumptions about the distribution of random circuits. However, we conjecture that our bounds are not tight due to numerical evidence suggesting extremely fast convergence of the empirical distribution to the Porter-Thomas distribution \cite{boixo2018characterizing}. It is known that certain random circuit quantities become close to their Haar value at log depth \cite{BF13,dalzell2022random,heinrich2025anti,laracuente2025quantum} and that there are circuit ensembles that form relative-error designs in log depth \cite{schuster2024random,laracuente2024designs}. Thus we leave open the question of whether closeness to Porter-Thomas occurs at some sublinear depth.

Another key open question is whether a similar result can be established in the presence of (sufficiently weak) noise. For random circuits with local depolarizing noise (or any local unital noise), for what depth and noise rate $\ep$ is the output distribution still close to Porter-Thomas? There is a recently understood phase transition in the noise rate \cite{dalzell2024noise,heinrich2022randomized,ware2023sharp,morvan2023phase}. As local noise rapidly drives the output distribution to uniform, a sufficient condition for the noise rate should be $\ep\ll 1/n\ell $ \cite{dalzell2024noise}. If $\ell_{\rm PT}$ is the depth at which the noiseless circuit first becomes approximately Porter-Thomas, is there some depth $\ell_{\rm PT}\leq \ell \ll 1/\ep n$ at which the noisy circuit output is close in TV distance to Porter-Thomas?

\section{Preliminaries}\label{section:prelim}
We consider systems of $n$ qubits and, throughout the text, we denote the total Hilbert space dimension as $d=2^n$.
Consider a unitary $U \in \SU(2^n)$.  Denote the Haar measure over $\SU(d)$ as $\mu_H$, and $b \in \{0,1\}^{n}$ as an arbitrary but fixed bitstring.

\begin{definition}[Output probability distribution]
    We define the output probability distribution $p_U(b)$ for a fixed bitstring $b$ and a unitary $U$ sampled from an ensemble $\nu$ as
    \begin{equation}
        p_U(b)=\left|\vev{b|U|0}\right|^2 \,,\quad U \sim \nu \,.
    \end{equation}
\end{definition}

\begin{definition}[Porter-Thomas distribution] The Porter-Thomas distribution describes the distribution of output probabilities $p$ of Haar random unitaries on $n$ qubits, on the input $0^n$ and fixed output bitstring $b$. It is given by the following expression,
\begin{equation}
    f_{\rm PT}(p) \coloneq d\,e^{-d\, p} \,.
\end{equation}
\end{definition}
The exact expression for the distribution of probabilities $p$ is given by $f_{\rm Haar}(p)=(d-1)(1-p)^{d-2}$ \cite{Zyczkowski_2000,Kus_1988}, i.e.\ a $\mathrm{Beta}(1,d-1)$ random variable. One can check that this converges to the above expression for large $d$. More precisely, the exponential Porter-Thomas and Haar probability densities agree up to $O(1/d)$ corrections and have total variation distance ${\rm d}_{\rm TV}(f_{\rm PT},f_{\rm Haar})\leq O(1/d)$. Thus for convenience, in the rest of this paper we refer to the distribution of output probabilities over Haar random unitaries on $n$ qubits as being the Porter-Thomas distribution.

\begin{definition}[$\eps$-approximate $k$-designs]
A probability distribution $\nu$ on $\SU(d)$ is an $\eps$-approximate $k$-design if and only if the $k$-fold twirl of $\nu$ satisfies
\begin{equation}
    \big\|\mathcal{M}_{\nu}^{(k)} - \mathcal{M}_{\mu_H}^{(k)}\big\|_{\diamond} \le \eps \,,
\end{equation}
where the $k$-fold twirl $\mathcal{M}_{\nu}^{(k)}(\rho) = \int d\nu\, U^{\otimes k}\rho U^{\dagger \otimes k}$, and $\diamond$ indicates the diamond norm distance.
\end{definition}

For convenience, we focus on `brickwork random quantum circuits' \cite{NRVH16,NVH17,vonKey17,RQCstatmech}, which simplifies some aspects of our analysis. Nevertheless, we expect our results to hold for random quantum circuits on any connected architecture with 2-local gates. 

\begin{definition}[Brickwork random quantum circuits]
A brickwork random quantum circuit on $n$ qubits of depth $l$ is constructed as follows. Assume $n$ is even and take periodic boundary conditions. On each odd numbered layer, apply a gate $U_{i,i+1}$ drawn from the Haar measure over $\SU(4)$ on qubits $(i,i+1)$ for odd $i$, $1 \le i \le n$; on even layers apply such gates on qubits $(i,i+1)$ for even $i$, identifying qubits $n+1$ and $1$. Apply these gates for $l$ layers.
\end{definition}

\noindent We will use the following fact about the moments of the output probabilities $p_{U}(b) = |\vev{b|U|0}|^2$ for Haar random unitaries $U\sim \mu_H$ \cite{nietner2025average}
\begin{equation}
\underset{U \sim \mu_H}{\mathbb{E}}\!\left[\abs{p_U(b)}^k \right]
= \frac{k!}{\prod_{i=0}^{k-1} (d+i)}\,.
\end{equation}
For depth $\ell$ random quantum circuits, the moments of the output probabilities match those of Haar random unitaries once the distribution of circuits forms approximate $k$-designs. 

\begin{theorem}[Brickwork circuits are approximate designs \cite{baer2026random}]
\label{thm:brickwork-design}  Let $\mathcal{C}(l)$ be the ensemble of depth $l$ brickwork Haar random circuits on $n$ qubits.
For every $k\ge1$ and $0<\eps<1$, if
\begin{equation}
    l\ge C \bigl(nk+\log(1/\eps)\bigr),
\end{equation}
then $\mathcal{C}(l)$ is an additive $\eps$-approximate unitary $k$-design and $C$ is a constant.
\end{theorem}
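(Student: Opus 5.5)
The plan is the standard moment-operator (spectral gap) argument, arranged so that the gap of one brickwork period is a \emph{constant} independent of both $n$ and $k$. Write $\mathcal{M}^{(k)}_\nu$ for the $k$-fold twirl and pass to its vectorized form $T_\nu=\mathbb{E}_{U\sim\nu}\,U^{\otimes k}\otimes\bar U^{\otimes k}$. For the Haar measure, $T_{\mu_H}=P_H$ is the orthogonal projector onto the span of the permutation operators. For one period (two layers) of the brickwork, $T=T_{\rm even}T_{\rm odd}$. Each $T_{\rm odd}$, $T_{\rm even}$ is a tensor product of local projectors $P_{i,i+1}$ onto the local permutation spaces, and each such factor fixes the range of $P_H$. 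Hence $T^{l/2}-P_H=(T-P_H)^{l/2}$. The first step is the conversion from operator norm to diamond norm:
\begin{equation}
\big\|\mathcal{M}^{(k)}_{\mathcal{C}(l)}-\mathcal{M}^{(k)}_{\mu_H}\big\|_\diamond\le d^{2k}\,\big\|T^{l/2}-P_H\big\|_\infty\le 2^{2nk}\,\big\|T-P_H\big\|_\infty^{l/2}.
\end{equation}
So if $\|T-P_H\|_\infty\le 1-\Delta$ with $\Delta$ an absolute constant, then $l\ge C(nk+\log(1/\eps))$ gives error at most $\eps$. This accounts exactly for the additive form $nk+\log(1/\eps)$ in the statement.

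Everything therefore reduces to a \emph{constant} spectral gap for the brickwork moment operator, uniformly in $n$ and $k$. I would get this in two stages.

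\textbf{Uniformity in $n$.} Use the detectability lemma or Nachtergaele's martingale method for the frustration-free ``Hamiltonian'' $H=\sum_i(1-P_{i,i+1})$, whose ground space is the range of $P_H$ once $d\ge k$. This bounds the gap of $T$ on the full chain by the gap on blocks of some fixed size $m$, up to constant losses.

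\textbf{Uniformity in $k$.} Choose the block size $m=\Theta(\log k)$, so that $2^m\gtrsim k$. On such blocks the local permutation operators are nearly orthonormal, because their Gram matrix is $\varepsilon$-close to the identity with $\varepsilon\sim k^2/2^m$. A Haferkamp-type argument then shows that the projectors for overlapping patches nearly commute. That yields a patch gap bounded below by an absolute constant rather than $1/\mathrm{poly}(k)$.

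An alternative route to the same $k$-independent constant is the ``gluing'' construction of small-patch designs (Schuster--Haferkamp--Huang style). One shows that two overlapping layers of patch-level Haar unitaries of size $\Theta(\log(nk/\eps))$ already form a relative-error design. Then one uses the fact that depth-$O(\log k)$ brickwork on each patch approximates a Haar patch with constant gap. Finally one composes, losing only a constant factor per period.

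The main obstacle is this $k$-independence of the local gap. Haferkamp's bounds lose $\mathrm{polylog}(k)$, which would give $l\sim nk\,\mathrm{polylog}(k)$ rather than linear in $k$. The delicate point is controlling the overlap of permutation-operator projectors on blocks of size only logarithmic in $k$. It must be sharp enough that the martingale or detectability condition $\|P_AP_B-P_{A\cup B}\|\le c<1$ holds with an absolute constant. I would attack it first through the near-orthogonality estimate for the Gram matrix of $S_k$ acting on $(\mathbb{C}^{2^m})^{\otimes k}$.
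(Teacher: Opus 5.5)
Your reduction is sound, and it is exactly what the paper relies on. The paper gives no proof of this statement beyond citing \cite{baer2026random} and remarking that it follows from a $k$-independent lower bound on the spectral gap of the moment operator. Three facts you use are correct:
\begin{itemize}
\item $TP_H=P_HT=P_H$, where the second identity holds because $T^\dagger=T_{\rm odd}T_{\rm even}$ also fixes the range of $P_H$;
\item hence $T^{l/2}-P_H=(T-P_H)^{l/2}$;
\item $\|\cdot\|_\diamond\le d^{2k}\|\cdot\|_\infty$ (in fact $d^{k}$ suffices).
\end{itemize}
Together these turn $\|T-P_H\|_\infty\le 1-\Delta$ with $\Delta=\Omega(1)$ into $l\ge C(nk+\log(1/\eps))$.

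The genuine gap is that this $n$- and $k$-independent bound on $\|T-P_H\|_\infty$ is the whole content of the theorem, and your proposal does not establish it. A complete argument along your lines must import it from the literature, as the paper does. The route you sketch would not produce it, for two reasons.

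First, Nachtergaele's martingale method with blocks of size $m$ loses a factor of order $1/m$, because every bond lies in about $m$ overlapping blocks. With $m=\Theta(\log k)$, even a constant block gap yields only $\Omega(1/\log k)$, hence depth $O(nk\log k)$. A Knabe-type finite-size criterion could avoid this loss, but only once the block gap is already known to be an absolute constant.

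Second, and more seriously, near-orthogonality of the permutation operators on $(\mathbb{C}^{2^m})^{\otimes k}$ for $2^m\gtrsim k^2$ controls only the overlap parameter $\|P_AP_B-P_{A\cup B}\|_\infty$. That is, it shows that \emph{exact} Haar unitaries on overlapping $\Theta(\log k)$-qubit patches glue well. It says nothing about the gap of the two-qubit brickwork \emph{inside} a patch, i.e.\ how fast it converges to the patch-Haar projector. That is where all the $k$-dependence sits: the two-qubit permutation operators on $(\mathbb{C}^4)^{\otimes k}$ are far from orthogonal once $k\gg 4$. Bounding that block gap uniformly in $k$ is the original problem restricted to $\Theta(\log k)$ qubits. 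So the step ``nearly commuting patch projectors give a constant patch gap'' is a non sequitur. Your alternative gluing route presupposes the same unproven patch-level gap.
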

This follows from a $k$ independent lower bound on the spectral gap of the moment operator for random circuits.
In particular, we will be using the fact that brickwork Haar random circuits of depth $\ell \ge \Omega(nk)$ form $d^{-k}$-approximate $k$-designs.
We will also use the fact that the distribution is bounded in a finite interval with high probability.

\begin{theorem}[Brickwork circuit distributions are bounded \cite{hunter2026ideal}]
\label{thm:brickwork-bounded}  
Fix a bitstring $b\in\{0,1\}^n$. Let $\mathcal{C}(l)$ be an ensemble of depth $l \ge \Omega(nk)$ brickwork Haar random circuits on $n$ qubits, $k \ge 2$, such that $U \sim \mathcal{C}(l)$. Then,
\begin{equation}
    p_U(b) \leq \frac{2k}{d}
\end{equation}
with probability over the choice of circuit at least $1-2^{-k}$.
\end{theorem}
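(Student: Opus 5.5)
The plan is to apply Markov's inequality to the $k$-th moment of $p_U(b)$, with the design property from \cref{thm:brickwork-design} supplying the moment bound. We choose the depth so that $\mathcal{C}(l)$ is an $\eps$-approximate $k$-design with $\eps = d^{-k}$. Since $\log(1/\eps) = nk\log 2$, \cref{thm:brickwork-design} gives this as soon as $l \ge C(nk + nk\log 2) = \Omega(nk)$, which is the stated depth hypothesis.

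\textbf{Step 1: express the $k$-th moment through the twirl.} We write
\begin{equation}
p_U(b)^k = \tr\!\big( \ketbra{b}^{\otimes k}\, U^{\otimes k} \ketbra{0}^{\otimes k} U^{\dagger\otimes k} \big),
\end{equation}
so that
\begin{equation}
\Ex_{U\sim\nu}\big[p_U(b)^k\big] = \tr\!\big(\ketbra{b}^{\otimes k}\,\mathcal{M}^{(k)}_{\nu}(\ketbra{0}^{\otimes k})\big).
\end{equation}
The operator $\ketbra{b}^{\otimes k}$ has operator norm $1$, and the input $\ketbra{0}^{\otimes k}$ is a state. The difference between the circuit and Haar values is therefore at most the trace norm of $(\mathcal{M}^{(k)}_{\mathcal{C}(l)}-\mathcal{M}^{(k)}_{\mu_H})(\ketbra{0}^{\otimes k})$, which is at most the diamond norm of $\mathcal{M}^{(k)}_{\mathcal{C}(l)}-\mathcal{M}^{(k)}_{\mu_H}$, i.e.\ at most $\eps = d^{-k}$. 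Combining this with the Haar moment formula gives
\begin{equation}
\Ex_{U\sim\mathcal{C}(l)}\big[p_U(b)^k\big] \le \frac{k!}{\prod_{i=0}^{k-1}(d+i)} + d^{-k} \le \frac{k!+1}{d^k} \le \frac{2\,k!}{d^k}.
\end{equation}

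\textbf{Step 2: apply Markov's inequality.} Since $p_U(b)\ge 0$,
\begin{equation}
\Pr\!\Big[p_U(b) \ge \tfrac{2k}{d}\Big] \le \Ex\big[p_U(b)^k\big]\Big(\frac{d}{2k}\Big)^{k} \le \frac{2\,k!}{(2k)^k} = 2^{-k}\cdot \frac{2\,k!}{k^k}.
\end{equation}
It remains to check that $2\,k!/k^k \le 1$ for $k \ge 2$. At $k=2$ the ratio equals $1$. Passing from $k$ to $k+1$ multiplies $k!/k^k$ by $(k/(k+1))^k \le 1$, so the ratio is non-increasing and the claim holds for all $k \ge 2$. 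Hence $p_U(b) \le 2k/d$ with probability at least $1-2^{-k}$.

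\textbf{Main obstacle.} The only delicate point is the constant bookkeeping in Step 2. A Stirling-type bound $k! \le e\sqrt{k}\,(k/e)^k$ is slightly too lossy at $k=2$, which is why we use the crude comparison $k!\le k^k/2$ instead. The design error must also be kept additive at scale $d^{-k}$ rather than relative, which is exactly what the choice $\eps = d^{-k}$ in \cref{thm:brickwork-design} ensures.
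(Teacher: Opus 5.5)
Your proof is correct and follows the paper's argument: Markov's inequality on the $k$-th moment, the $d^{-k}$-approximate $k$-design bound $\mathbb{E}[p_U(b)^k]\le (k!+1)/d^k$, and a factorial comparison at $t=2k/d$ ($2\,k!\le k^k$ for $k\ge 2$ in yours, $k!+1\le k^k$ in the paper). You also spell out the diamond-norm/H\"older step and the depth bookkeeping that yields the $d^{-k}$-design from \cref{thm:brickwork-design}; the paper leaves both implicit.
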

When $k=\Omega(n)$, one can also union bound over all output probabilities and show that the largest output probability is $\max_b p_U(b)\leq O(n/2^n)$ with probability $>1-2^{-n}$ over circuits.

\begin{proof}
Consider the probability of sampling a fixed bitstring $b\in\{0,1\}^n$. Then, for any integer $k\ge 1$, Markov's inequality gives
\begin{equation}
\Pr_{U\sim C(l)}\!\left[p_U(b)\ge t\right]
\le
\frac{\mathbb{E}_{U\sim C(l)}[p_U(b)^k]}{t^k}.
\end{equation}
Since $C(l)$ is an additive $d^{-k}$-approximate unitary $k$-design from \cref{thm:brickwork-design}, we get the bound
\begin{equation}
\mathbb{E}_{U\sim C(l)}[p_U(b)^k]
\le
\frac{k!}{\prod_{i=0}^{k-1}(d+i)} + d^{-k}
\le
\frac{k!}{d^k} + d^{-k}.
\end{equation}
Therefore,
\begin{equation}
\Pr_{U\sim C(l)}\!\left[p_U(b)\ge t\right]
\le
\frac{k!}{(dt)^k}+\frac{d^{-k}}{t^k} = \frac{k!+1}{(dt)^k}.
\label{eq:output-probability-tail}
\end{equation}
In particular, taking $t=ck/d$ for any constant $c>1$, and using
$k!+1\le k^k$ when $k \ge 2$, gives
\begin{equation}
    \Pr_{U\sim C(l)}\!\left[p_U(b)\ge \frac{ck}{d}\right] \le \left(\frac{kd}{dck}\right)^k = c^{-k}.
\end{equation}
Thus, setting $c=2$ yields
\begin{equation}
    p_U(b) \le \frac{2k}{d}
\end{equation}
with probability $\ge 1-2^{-k}$.
\end{proof}

\subsection{Total variation bound in terms of characteristic functions}
We present the following result concerning the total variation distance between two arbitrary probability distributions $f$ and $g$. We will later set $f$ to be the output probability distribution of random circuits which form $\eps$-approximate $k$-designs and $g$ to be the Porter-Thomas distribution.

\begin{theorem}[\cite{bobkov2016proximity}]
\label{thm:bobkov}
Let $f$ and $g$ be probability density functions with characteristic functions $\hat{f}$ and $\hat{g}$, respectively, concentrated on the interval $[-L, L]$. Then, the following inequality holds.
\begin{equation}
\dtv^2(f,g) \leq \frac{L}{4\pi} \int_{-\infty}^{\infty} |\hat{f}(t) - \hat{g}(t)|^2 \, dt.   
\end{equation}

\end{theorem}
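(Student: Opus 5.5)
The plan is to combine Cauchy--Schwarz on the bounded support with Plancherel's identity. I read ``concentrated on $[-L,L]$'' as saying that the densities $f$ and $g$ vanish outside $[-L,L]$. I use the conventions $\dtv(f,g)=\tfrac12\int|f-g|\,dx$ and $\hat f(t)=\int e^{itx}f(x)\,dx$. Set $z=f-g$.

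\textbf{Step 1: reduce to the case $z\in L^2$.} If $\int|\hat f-\hat g|^2\,dt=\infty$ the inequality is trivial. Otherwise $\hat z=\hat f-\hat g\in L^2(\mathbb R)$. Since $z\in L^1$, Plancherel then forces $z\in L^2$, with
\begin{equation}
\int_{-\infty}^{\infty}|z(x)|^2\,dx=\frac{1}{2\pi}\int_{-\infty}^{\infty}|\hat z(t)|^2\,dt .
\end{equation}
This step is the only mildly delicate point. For an $L^1$ function whose Fourier transform is square integrable, one passes through the $L^2$ Fourier transform, which agrees with the $L^1$ transform on $L^1\cap L^2$, and uses a standard approximation argument.

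\textbf{Step 2: Cauchy--Schwarz on $[-L,L]$.} Since $z$ vanishes outside an interval of length $2L$,
\begin{equation}
\Bigl(\int_{-L}^{L}|z|\,dx\Bigr)^2\le 2L\int_{-L}^{L}|z|^2\,dx .
\end{equation}
It follows that
\begin{equation}
\dtv^2(f,g)=\frac14\Bigl(\int|z|\,dx\Bigr)^2\le \frac{L}{2}\int|z|^2\,dx .
\end{equation}

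\textbf{Step 3: combine.} Substituting the Plancherel identity from Step 1 into Step 2 gives
\begin{equation}
\dtv^2(f,g)\le\frac{L}{2}\cdot\frac{1}{2\pi}\int|\hat f-\hat g|^2\,dt=\frac{L}{4\pi}\int_{-\infty}^{\infty}|\hat f(t)-\hat g(t)|^2\,dt,
\end{equation}
which is the claimed bound.

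The main obstacle is not computational. It is keeping the normalizations consistent: the factor $\tfrac12$ in the total variation distance and the $2\pi$ in Plancherel together produce the constant $\tfrac{L}{4\pi}$.
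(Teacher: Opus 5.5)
Your proof is correct and is essentially the argument the paper relies on: the paper cites \cite{bobkov2016proximity} for this statement without proof, but proves its variant \cref{cor:bobkov-supported} by the same route, namely Cauchy--Schwarz on the bounded support followed by Plancherel, with the same $\tfrac12$ and $2\pi$ normalizations. Your Step 1, which establishes $z\in L^2$ when $z\in L^1$ and $\hat z\in L^2$, is a justification the paper omits, and it is sound.
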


\medskip
\noindent Now, consider the characteristic function $\hat{f}$ of $f$.

\begin{equation}
\hat{f}(t) = \int_{-\infty}^{\infty} e^{itx} \, dF(x) = \sum_{r=0}^{\infty} \frac{(it)^r}{r!} m_r,    
\label{eq:char-taylor}
\end{equation}
where $m_r$ are the moments of the distribution $f$. If we assume the first $k+1$ moments of $f$ are finite, then we also get the following useful expression \cite{knapp2005basic} by the Taylor remainder theorem,
\begin{equation}
\hat{f}(t) = \sum_{r=0}^{k-1} \frac{(it)^r}{r!} m_r + R_k(it) \,,
\end{equation}
where the remainder term $\abs{R_k(it)} \le \frac{\abs{t}^k m_k}{k!}$. Define the difference of the characteristic functions $\hat{z}(t)$ as follows.
\begin{equation}\label{eq:moment_expansion}
\hat{z}(t) = \hat{f}(t) - \hat{g}(t) = \sum_{r=0}^{\infty} \frac{(it)^r}{r!} (m_r - m'_r),    
\end{equation}
where $m'_r$ are the moments of $g$. So, the problem of bounding the total variation distance between two distributions can be expressed as bounding 
\begin{equation}
     \int_{-\infty}^{\infty}\abs{\hat{z}(t)}^2 \, dt \,.
\end{equation}
But what if our distributions are only \emph{mostly} bounded in an interval $[-L,L]$ instead of being fully supported on $[-L,L]$? This turns out to be the case we will care about, so we prove a simple variation of the previous theorem.

\begin{theorem}
\label{cor:bobkov-supported}
Let $f$ and $g$ be probability density functions with characteristic functions $\hat{f}$ and $\hat{g}$, respectively, concentrated on the interval $[0, L]$ with probability mass at least $1-\delta$. Then, the following inequality holds.
\begin{equation}
    \dtv(f,g) \leq \left({\frac{L}{8\pi} \int_{-\infty}^{\infty} |\hat{f}(t) - \hat{g}(t)|^2 \, dt}\right)^{1/2} + \delta \,.
\end{equation}
\end{theorem}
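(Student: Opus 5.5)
The approach is to truncate both densities to $[0,L]$, renormalize, and apply \cref{thm:bobkov} to the truncated pair. The triangle inequality then pays an additive cost controlled by $\delta$, and the Fourier difference of the truncated pair is compared back to $\hat f-\hat g$.

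\textbf{Step 1 (truncation).} Write $f = f_1 + f_2$ with $f_1 = f\,\mathbf{1}_{[0,L]}$ and $f_2 = f\,\mathbf{1}_{[0,L]^c}$, so $\|f_2\|_{L^1}\le\delta$; decompose $g = g_1+g_2$ in the same way. Then
\begin{equation}
\dtv(f,g)=\tfrac12\|f-g\|_{L^1}\le \tfrac12\|f_1-g_1\|_{L^1}+\tfrac12\bigl(\|f_2\|_{L^1}+\|g_2\|_{L^1}\bigr)\le \tfrac12\|f_1-g_1\|_{L^1}+\delta .
\end{equation}

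\textbf{Step 2 (Cauchy--Schwarz and Plancherel).} Since $f_1-g_1$ is supported on an interval of length $L$, Cauchy--Schwarz gives $\|f_1-g_1\|_{L^1}\le \sqrt{L}\,\|f_1-g_1\|_{L^2}$. Plancherel gives $\|f_1-g_1\|_{L^2}^2=\frac{1}{2\pi}\int|\hat f_1-\hat g_1|^2\,dt$. Combining,
\begin{equation}
\tfrac12\|f_1-g_1\|_{L^1}\le\Bigl(\tfrac{L}{8\pi}\int|\hat f_1-\hat g_1|^2\,dt\Bigr)^{1/2},
\end{equation}
which is exactly the constant in the statement. This is the one-sided, length-$L$ version of the argument behind \cref{thm:bobkov}, whose interval $[-L,L]$ has length $2L$; that accounts for $L/4\pi$ becoming $L/8\pi$ after halving the $L^1$ norm.

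\textbf{Step 3 (main obstacle: returning to $\hat f-\hat g$).} We need $\int|\hat f_1-\hat g_1|^2$ in terms of $\int|\hat f-\hat g|^2$. Plancherel again gives $\int|\hat f_1-\hat g_1|^2 = 2\pi\|(f-g)\mathbf 1_{[0,L]}\|_{L^2}^2\le 2\pi\|f-g\|_{L^2}^2=\int|\hat f-\hat g|^2$, because restricting a function to a set can only decrease its $L^2$ norm. So the inequality $\int|\hat f_1-\hat g_1|^2\le\int|\hat f-\hat g|^2$ holds without any loss. The only genuine point to check is square-integrability of $f-g$. If $\hat f-\hat g\notin L^2$, the right-hand side is infinite and the claim is trivial; otherwise $f-g\in L^2$ by Plancherel.

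Chaining Steps 1–3 yields $\dtv(f,g)\le\bigl(\frac{L}{8\pi}\int|\hat f-\hat g|^2\bigr)^{1/2}+\delta$.
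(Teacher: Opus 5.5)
Your proposal is correct and is essentially the paper's proof. You split $\tfrac12\|f-g\|_{L^1}$ into the part on $[0,L]$ and the tails (costing at most $\delta$), apply Cauchy--Schwarz on $[0,L]$, bound by the full $L^2$ norm of $f-g$, and finish with Plancherel; the renormalization and the appeal to \cref{thm:bobkov} announced in your first sentence are never used and are not needed, while your remark on the square-integrability of $f-g$ fills in a detail the paper leaves implicit.
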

\begin{proof}
    Let $z \coloneq f-g$ and $\hat{z} = \hat{f} - \hat{g}$. By the triangle inequality, Cauchy--Schwarz on $L^2(\mathbb{R})$, and the Plancherel theorem,
\begin{align}
    \dtv(f,g) &=\frac12\int_{0}^L |z(x)|\,dx +\frac12 \int_{x \notin [0,L]}|z(x)|\,dx \\
    &\le \frac{\sqrt{L}}2\|z\|_{L^2(\mathbb R)} + \delta \\
    &=\left(\frac{L}{8\pi}\int_{\mathbb R}|\widehat{z}(t)|^2\,dt
    \right)^{1/2} + \delta \,.
    \label{eq:smoothed-fourier-tv}
\end{align}
\end{proof}

\section{Combined bound}
\label{section:combined-bound}
The proof of our main theorem (\cref{thm:combined-tv-intro}) is based on bounding the quantity $\abs{z(t)}^2$ as suggested by \cref{thm:brickwork-bounded}. In \cref{section:moment-bounds}, we will bound certain regimes of the mass under $\abs{z(t)}^2$ in terms of some parameters. In \cref{section:smoothness}, we obtain a bound on the distance between distributions via smoothing. In this section, we motivate the details of our technical analysis by showing how to combine some key bounds and assumptions, which we prove in later sections, in order to get a global bound for the TV distance between $f$ and $g$.

\begin{theorem}[Restatement of \cref{thm:combined-tv-intro}]\label{thm:combined-tv}
Let $b \in \{0,1\}^n$ be a fixed bitstring of length $n$, and let 
\begin{equation}
P\coloneq p_U(b)=\left|\vev{b|U|0}\right|^2 \,,\quad U \sim \mathcal{C}(l) \,,
\end{equation}
be the probability distribution over measuring $b$, where $U$ is sampled from a random brickwork circuit ensemble $\mathcal{C}(l)$ of depth $l$ such that it forms a $d^{-k}$-approximate $k$-design, with $k \ge 2$ and $d=2^n$. Let $Q$ be the corresponding probability distribution where $U$ is sampled from the Haar measure over $\SU(d)$. Then,
\begin{equation}
\dtv(P,Q) \le O\left(\sqrt{\frac{\log(k)}{k}}\right) \,.
\end{equation}
In particular, if $k=\Theta(n^{2m}\log(n))$, which implies $l = \Theta(n^{2m+1}\log(n))$ by \cref{thm:brickwork-design}, then
\begin{equation}
   \dtv(P,Q) \le O(1/n^m)\,.
\end{equation}
\end{theorem}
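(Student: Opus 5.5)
The plan is to combine the three Fourier-band estimates from the technical overview with \cref{cor:bobkov-supported}, applied to smoothed versions of the two densities, and then to optimize the smoothing parameter $h$.

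\emph{Step 1: truncation.} Let $f$ be the density of $P$ and $g$ that of $Q$. By \cref{thm:brickwork-bounded}, $f$ puts mass at least $1-2^{-k}$ on $[0,2k/d]$. By the exact $\mathrm{Beta}(1,d-1)$ law, $g$ puts mass at least $1-e^{-2k+O(1)}$ on the same interval. So I take $L=2k/d$ and $\delta=O(2^{-k})$.

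\emph{Step 2: mollification.} Using the smoothness bound $\|f'\|_{L^1}=O(d)$ from \cref{thm:inverse-moment} (obtained via the reduction in \cref{lem:reduction-TV-to-R}), and the explicit bound $\|g'\|_{L^1}\le d$, \cref{thm:inf-mollification} gives $C^\infty$ densities $f_\ast,g_\ast$. They satisfy
\[
\dtv(f,f_\ast)\le h,\qquad \dtv(g,g_\ast)\le h,
\]
and $\widehat{f_\ast},\widehat{g_\ast}$ vanish for $|t|>a'd/h$.

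The mollifier has width $O(h/d)$, so $f_\ast,g_\ast$ remain concentrated on $[0,L+O(h/d)]\subseteq[0,3k/d]$ up to mass $\delta$. If the mollification is implemented as convolution with a band-limited kernel $\hat\phi_h$ with $|\hat\phi_h|\le 1$, then $\hat z_\ast=\hat\phi_h\,\hat z$, so $|\hat z_\ast|\le|\hat z|$ pointwise. This lets me reuse the bounds proved for $\hat z$ on every band.

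By the triangle inequality and \cref{cor:bobkov-supported},
\[
\dtv(f,g)\le 2h+\delta+\Big(\tfrac{3k}{8\pi d}\int|\hat z_\ast|^2\Big)^{1/2}.
\]

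\emph{Step 3: Fourier bands.} I split the integral into three ranges.
\begin{itemize}
\item On $|t|\le d/2$, \cref{lemma:inf-low-moments} gives $O(d/(k2^{2k}))$.
\item On $d/2\le|t|\le a'd/h$, \cref{lem:intermediate-fourier-band} gives $d\,e^{-\Omega(kh^2)}/h$. I would rescale $h$ by the constant $a'$ so the band endpoints match.
\item On $|t|>a'd/h$, the integrand vanishes.
\end{itemize}
Multiplying by $L\sim k/d$, the factors of $d$ cancel, giving
\[
\dtv(f,g)\le 2h+O(2^{-k})+O\big(2^{-k}\big)+O\Big(\sqrt{k/h}\,e^{-\Omega(kh^2)}\Big).
\]

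\emph{Step 4: choice of $h$.} Take $h=C\sqrt{\log k/k}$ with $C$ large. Then $e^{-\Omega(kh^2)}=k^{-\Omega(C^2)}$, and $\sqrt{k/h}\le k$, so the last term is $O(k^{-1})$. The total is therefore $O(\sqrt{\log k/k})$.

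\emph{Step 5: depth.} For $k=\Theta(n^{2m}\log n)$ we get $\sqrt{\log k/k}=O(n^{-m})$. \cref{thm:brickwork-design} with $\eps=d^{-k}$ then requires $l=\Theta(nk)=\Theta(n^{2m+1}\log n)$.

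\emph{Main obstacle.} The delicate point is the bookkeeping in Step 2. The band bounds are stated for $\hat z$, but Bobkov's inequality is applied to $\hat z_\ast$. I need either the pointwise domination $|\hat z_\ast|\le|\hat z|$, which requires mollification by a kernel with Fourier transform bounded by $1$, or band lemmas proved directly for the mollified pair. I would first check that \cref{thm:mollification}'s construction is a convolution with a band-limited kernel of this type. A secondary point is confirming that $L$ only grows by a constant factor after smoothing, so the $\sqrt{k}$ prefactor is absorbed by the exponentially small band terms.
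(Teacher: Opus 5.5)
Your proposal is correct and follows the paper's proof essentially step for step. It uses \cref{cor:bobkov-supported} with $L=2k/d$ and $\delta=2^{-k}$, a common nonnegative band-limited mollifier giving $|\widehat{z_\ast}|\le|\hat z|$ (exactly the point you flag), the three Fourier bands from \cref{lem:moment-matching-low-frequency}, \cref{lem:intermediate-fourier-band} and \cref{thm:mollification}, and the choice $h\asymp\sqrt{\log k/k}$.

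One small correction concerns the tail mass after smoothing. A band-limited kernel cannot be compactly supported, and the densities are of order $d$ near $0$, so smoothing at scale $h/d$ pushes $\Theta(h)$ mass below $0$. The mass of $f_\ast,g_\ast$ outside the interval is therefore $\delta+O(h)$ rather than $\delta$; the paper takes $\delta+\eps'$ via TV-closeness. This only changes the constant in front of $h$.
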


\begin{proof}
Here, let $f$ denote the PDF of $P$ and $g$ be the PDF of $Q$. From \cref{cor:bobkov-supported}, we know that if two probability density functions (PDFs) $f,g$ are concentrated in some interval $[0,L]$ with probability at least $1-\delta$, their TV distance is upper bounded by
\begin{equation}
\sqrt{\frac{L}{8\pi}\int_{-\infty}^{\infty}\abs{\hat{z}(t)}^2 \, dt} + \delta\,,
\end{equation}
where $\hat{z}(t)=\int_{-\infty}^{\infty} z(x) e^{itx} dx$ is the characteristic function of $z(x) \coloneq (f-g)(x)$.

Our broad strategy will be to divide the integral into 3 parts and bound each part separately using different techniques as follows, for some $0 < h < 1$.
\begin{itemize}
    \item $0 \le \abs{t} \le d/2$: This region of the integral is bounded by using the fact that the first $k$ moments of $f$ and $g$ are $\eps$-close, as shown in \cref{lem:moment-matching-low-frequency}.
    \item $\abs{t} \ge d/h$: This region of the integral is bounded by showing there exist well-bounded \textit{mollified} distributions that are obtained through smoothing the tails of $\hat{f}$ and $\hat{g}$, as shown in \cref{thm:mollification}
    \item $d/2 \le \abs{t} \le d/h$: This region of the integral is bounded by an application of the Hadamard three-circle theorem, as defined in \cref{thm:three-circle}, which intuitively says that for analytic functions satisfying the above two properties, the intermediate region must also be bounded, as shown in \cref{lem:intermediate-fourier-band}.
\end{itemize}

Now, we know that the first $k$ moments of $f$ and $g$ are $\eps$-close, since $f$ comes from an $\eps$-approximate $k$-design. Thus, we directly apply \cref{lem:moment-matching-low-frequency} for some $k \ge 2$ and $\eps \le d^{-k}$, and set $L=2k/d$ as the circuit has depth $\Omega(nk)$. So,

\begin{equation}
    \label{eq:low-frequency-control}
    \frac{L}{8\pi}\int_{\abs{t} \le d/2}\abs{\hat{z}(t)}^2 \, dt \le O(1/2^{2k}) \,.
\end{equation}

Since moments do not control the shape or regularity of the distributions themselves, we need additional tools to bound the distance between them. So, we smoothen both distributions using mollification as described in \cref{thm:mollification}, with a common mollifier, obtaining the mollified functions ${f_{\ast}}$ and ${g_\ast}$ which are $\eps'$ apart in TV distance from $f$ and $g$ respectively. We also obtain the additional property that $\widehat{f_\ast}(t)=\widehat{g_\ast}(t)=0$ for some $\abs{t} \ge a'd/\eps'$, for some constant $a'$, and for any choice of $\eps'>0$. Now, by the triangle inequality over TV distance,
 \begin{equation}\label{eq:tri-mol}
     \dtv(f,g) \le \dtv(f_\ast,g_\ast) + 2\eps'
 \end{equation}
So, we will now just bound the distance between the mollified functions. Let $z_\ast \coloneq f_\ast - g_\ast$, so we get $\abs{\widehat{z_\ast}} \le \abs{\widehat{z}}$ by property of convolution, and since the mollifier is non-negative and integrates to 1. Both the mollified functions are now concentrated in $[0,L]$ where $L=2k/d$ from \cref{thm:brickwork-bounded}, with at most $\eps' + \delta$ probability mass outside, since they are $\eps'$-close in TV distance to the non-mollified functions, which already had a $\delta$ tail mass. Thus, the bound from Eq.~\eqref{eq:low-frequency-control} becomes

\begin{equation}
    \frac{L}{8\pi}\int_{\abs{t} \le d/2}\abs{\widehat{z_\ast}(t)}^2 \, dt \le O(1/2^{2k}) \,.
\end{equation}
Next, by the additional property of mollification which cuts off the tails of $\widehat{f_\ast}$ and $\widehat{g_\ast}$,

\begin{equation}
    \frac{L}{8\pi}\int_{\abs{t} \ge d/h}\abs{\widehat{z_\ast}(t)}^2 \, dt = 0 \,
\end{equation}
where $h \coloneq \eps'/a'$.

\noindent Finally, to bound the intermediate region of the integral, we apply \cref{lem:intermediate-fourier-band} to $f$ and $g$ and get the bound
\begin{equation}
    \frac{L}{8\pi}\int_{d/2\le\abs{t}\le d/h} \abs{\hat{z_\ast}(t)}^2 \, dt \le \frac{L}{8\pi}\int_{d/2\le\abs{t}\le d/h} \abs{\hat{z}(t)}^2 \, dt \le \frac{ke^{\Omega(-kh^2)}}{h} \,.
\end{equation}

since $\abs{\widehat{z_\ast}} \le \abs{\widehat{z}}$ by property of convolution. Adding these bounds together along with the triangle inequality, substituting the value of $h$, and setting $L=2k/d$, so $\delta=2^{-k}$, by \cref{thm:brickwork-bounded}, we obtain a bound in TV distance in terms of $\eps'$, applying Eq.~\eqref{eq:tri-mol} and \cref{cor:bobkov-supported},

\begin{equation}
    \dtv(f,g) \le O\left(\sqrt{\frac{1}{2^{2k}} + \frac{ke^{\Omega(-k\eps'^2)}}{\eps'}} + 2^{-k} + 3\eps'\right)
\end{equation}

Setting $\eps'=O(\sqrt{\log(k)/k})$ to minimize this expression with a sufficiently large constant factor, we obtain the overall bound

\begin{equation}
    \dtv(f,g) \le O\left(\frac{\sqrt{k}}{2^k} + 2^{-k} + 3\sqrt{\frac{\log(k)}{k}}\right) \le  O\left(\sqrt{\frac{\log(k)}{k}}\right) \,.
\end{equation}
for $k \ge 2$.
\end{proof}

This tells us that brickwork random quantum circuits of depth $O(n^{2m+1}\log(n))$ are at most $O(1/n^m)$ far in TV distance from the Porter-Thomas distribution, under some assumptions about the smoothness and bounds of the former distribution. In order to complete this proof, we now need to prove the lemmas referred to in this section.

\section{Bounds from moments} \label{section:moment-bounds}
\begin{lemma}
\label{lem:moment-matching-low-frequency}
Given two probability density functions $f,g$ supported in $[0,1]$, let the function $z(x) \coloneq f(x)-g(x)$, and its characteristic function
\begin{equation}
    \hat{z}(t)=\int_{-\infty}^{\infty} (f(x)-g(x))e^{itx} \, dx \,,
\end{equation}
Assume that the first $k+1$ moments of $f,g$ are $m_j,m_j'$ such that $\abs{m_j-m_j'} \le \eps$, and $m_j \le j!/d^j$, for $0 \le j \le k$, where $0< \eps \le d^{-k}$, and making the mild assumption that $d \ge 4k$. Then,
\begin{equation}
    \int_{\abs{t}\le d/2}\abs{\hat{z}(t)}^2 \, dt \le  O\left(\frac{d}{k2^{2k}}\right)\,.
\end{equation}
\end{lemma}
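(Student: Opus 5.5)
Expand $\hat z$ as a Taylor series in $t$ and split it into a low-order part, where the moments nearly agree, and a tail, where each moment is individually small because $|t|\le d/2$. By \eqref{eq:char-taylor} and the remainder estimate quoted after it, applied to $f$ and to $g$ separately,
\begin{equation}
\hat z(t)=\sum_{r=0}^{k-1}\frac{(it)^r}{r!}(m_r-m_r') + R_k^{f}(it)-R_k^{g}(it),
\end{equation}
with $|R_k^{f}(it)|\le |t|^k m_k/k!$ and similarly for $g$. Since $f$ and $g$ are supported in $[0,1]$, all moments are finite and this expansion is legitimate for every real $t$. I will bound the two pieces pointwise on $|t|\le d/2$, square using $(a+b)^2\le 2a^2+2b^2$, and integrate.

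\textbf{Step 1 (low-order terms).} For $|t|\le d/2$ and $\abs{m_r-m_r'}\le\eps\le d^{-k}$,
\begin{equation}
\Big|\sum_{r<k}\frac{(it)^r}{r!}(m_r-m_r')\Big|\le d^{-k}\sum_{r<k}\frac{(d/2)^r}{r!}\le d^{-k}e^{d/2}.
\end{equation}
This exponential factor is too crude. Instead I will bound the sum termwise by $d^{-k}(d/2)^{k-1}$, using that $(d/2)^r/r!$ is increasing for $r<k\le d/4$; this uses the assumption $d\ge 4k$. Multiplying by the $k$ terms gives a bound of order $k\,2^{-k}/d$. Its square integrated over an interval of length $d$ gives $O(k^2 2^{-2k}/d)$, which is well within the target $O(d/(k2^{2k}))$ because $d\ge4k$.

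\textbf{Step 2 (remainder).} Here $m_k\le k!/d^k$ is assumed. For $g$, either $m_k'\le m_k+\eps\le (k!+1)/d^k$ or $g$ satisfies the same hypothesis directly; in both cases $m_k'=O(k!/d^k)$. Hence
\begin{equation}
|R_k^{f}(it)|+|R_k^{g}(it)|\le O\!\left(\frac{|t|^k}{d^k}\right).
\end{equation}
Integrating the square gives
\begin{equation}
\int_{|t|\le d/2}\frac{t^{2k}}{d^{2k}}\,dt=\frac{2}{2k+1}\cdot\frac{(d/2)^{2k+1}}{d^{2k}}=O\!\left(\frac{d}{k\,2^{2k}}\right),
\end{equation}
which is exactly the claimed bound. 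So the lemma's right-hand side is driven by the Taylor remainder at the edge $|t|=d/2$ of the frequency window. The choice of cutoff $d/2$, rather than $d$, is what produces the $2^{-2k}$ factor.

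\textbf{Main obstacle.} The delicate point is Step 1. A naive bound on the truncated sum picks up $e^{d/2}$ and destroys the estimate. This is why the Taylor expansion must be truncated at order $k$ rather than expanded fully, and why the monotonicity of $(d/2)^r/r!$ under $d\ge 4k$, together with $\eps\le d^{-k}$, is essential. The remaining care is to use the Lagrange-type remainder bound for complex exponentials, $|e^{ix}-\sum_{r<k}(ix)^r/r!|\le |x|^k/k!$. This holds for real $x$ and nonnegative support, so the remainder bound applies with the true $k$-th moment and no absolute-moment issue arises.
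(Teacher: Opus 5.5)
Your argument follows the paper's proof. You expand $\hat z$ to order $k-1$, control the low-order part with $\eps\le d^{-k}$, and control the Taylor remainder with $m_k+m_k'=O(k!/d^k)$. The remainder integrated over $|t|\le d/2$ gives the dominant $\frac{d}{(2k+1)2^{2k}}$, and your Step 2 is exactly the paper's last term. The only structural difference is in Step 1. You bound $S_1(t)=\sum_{r<k}|t|^r/r!$ uniformly by its value at $|t|=d/2$, whereas the paper uses a piecewise bound: $e^{|t|}$ for $|t|\le 2k$ and $2|t|^{k-1}/(k-1)!$ beyond. Your choice is legitimate and even avoids the paper's $e^{4k}d^{-2k}$ term.

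There is a slip in Step 1 as written. You discard the $1/(k-1)!$, which gives a pointwise bound $2k\,2^{-k}/d$ and an integrated contribution $O(k^2 2^{-2k}/d)$. You then assert this is within $O(d/(k2^{2k}))$ because $d\ge 4k$, but it is not. The ratio is of order $k^3/d^2$, which is about $k/16$ when $d=4k$, so the claimed lemma does not follow under its stated hypothesis.

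The fix is to keep the factorial that your monotonicity remark already identified. Better still, for $r+1\le k-1$ consecutive terms of $(d/2)^r/r!$ grow by the factor $(d/2)/(r+1)\ge 2$. Hence $S_1(d/2)\le 2(d/2)^{k-1}/(k-1)!$, which is precisely the paper's bound for $|t|>2k$. The low-order part is then at most $2^{2-k}/(d\,(k-1)!)$ pointwise. It contributes $O\bigl(2^{-2k}/(d\,(k-1)!^2)\bigr)$ to the integral, which is negligible against the target. With that correction your proof is complete.
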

\begin{proof}
Since we assumed the first $k+1$ moments of $f$ and $g$ are $\eps$-close, we define
\begin{equation}
\Delta m_r = (m_r - m'_r) \,, \quad \abs{\Delta m_r} \le \eps    
\end{equation}
for $r = 0$ to $k$. 
Substituting the above bounds into Eq.~\eqref{eq:moment_expansion} from Theorem \ref{thm:bobkov}, and then applying Taylor's remainder theorem with the remainder terms $R_k(it)$, $R_k'(it)$ we get,
\begin{align}
\abs{\hat{z}(t)} &\le \left|\sum_{r=0}^{k-1} \frac{(it)^r}{r!} \Delta              m_r + \sum_{r=k}^{\infty} \frac{(it)^r}{r!} m_r -                     \sum_{r=k}^{\infty} \frac{(it)^r}{r!} m_r'\right|\\
           &\le \sum_{r=0}^{k-1} \frac{\abs{t}^r}{r!} \eps + \abs{R_k(it)} + \abs{R_k'(it)} \, \\
           &\le \sum_{r=0}^{k-1} \frac{\abs{t}^r}{r!} \eps + \frac{\abs{t}^k(m_k + m_k')}{k!} \label{eq:zt_exp} \,,
\end{align}
applying the triangle inequality. Now, for the first summation, we use the bound
\begin{equation}
S_1(t) \coloneqq \sum_{r=0}^{k-1} \frac{\abs{t}^r}{r!}
\le
\begin{cases}
e^{\abs{t}}, & \abs{t} \le 2k, \\[4pt]
2\frac{\abs{t}^{k-1}}{(k-1)!}, & \abs{t} > 2k \,.
\end{cases}
\label{eq:s1t}
\end{equation}
Substituting back into Eq.~\eqref{eq:zt_exp} and using the given upper bounds on $m_k, m_k'$, we obtain 
\begin{align}
|\hat{z}(t)| &\leq S_1(t)\cdot\eps + \frac{|t|^{k}}{k!} \cdot                  3\frac{k!}{d^k} \\
       &\leq S_1(t)\cdot\eps + \frac{3|t|^{k}}{d^k}
\end{align}
Integrating over a symmetric interval $[-T, T]$, assuming $T \ge 2k$ and substituting $S_1(t)$ from Eq.~\eqref{eq:s1t} accordingly,
\begin{align}
\int_{-T}^{T} |\hat{z}(t)|^2 \, dt
&\le \int_{-T}^{T} \left(S_1(t)\cdot\eps+\frac{3|t|^{k}}{d^{k}}\right)^2 \, dt \\
&\le 4\int_{0}^{T} \left(S_1(t)^2\cdot\eps^2+ \frac{9t^{2k}}{d^{2k}}\right) \, dt \\
&\le 4\ep^2 \int_{0}^{2k} e^{2t} \, dt
   + \frac{16\eps^2}{(k-1)!^2}\int_{2k}^{T} |t|^{2k-2} \, dt
   + \frac{36}{d^{2k}} \int_{0}^{T} |t|^{2k} \, dt \\
&\le 2\eps^2(e^{4k}-1)
   + \frac{16\eps^2 T^{2k-1}}{(2k-1)(k-1)!^2}
   + \frac{36T^{2k+1}}{d^{2k}(2k+1)} .
\end{align}

\noindent Now, if we let $T=d/2$ and $\eps \le d^{-k}$, then
\begin{equation}
\int_{-d/2}^{d/2} |\hat{z}(t)|^2 dt \le \frac{2e^{4k}}{d^{2k}} + \frac{16}{(2k-1)(k-1)!^2d^{2k}}\left(\frac{d}{2}\right)^{2k-1} + \frac{18d}{(2k+1) \, 2^{2k}} \le O\left(\frac{d}{k2^{2k}}\right) .
\end{equation}
Note that this assumes $d/2 \ge 2k$. However, for the case $k \ge d/4$, we could set $T$ to be larger, and still obtain the fact that the resulting integral is well-bounded.
\end{proof}

\noindent Next, we will prove that $\int_{d/2}^{d/h} |z(t)|^2\,dt$ and $\int_{-d/h}^{-d/2} |z(t)|^2\,dt$ are negligible quantities, for some $0<h\leq 1$. This again uses information about the moments of the distributions. The main technical ingredient of our proof, however, will be from complex analysis, in particular it involves an application of the Hadamard three-circle theorem.

\begin{theorem}[Hadamard three-circle theorem]
\label{thm:three-circle}
Let the function $H$ be analytic on a neighborhood of the annulus $r_1\le |z|\le r_2$, where $0<r_1<r_2$, and define
\begin{equation}
    M(r):=\max_{|z|=r}|H(z)|.
\end{equation}
Then $\log M(r)$ is convex as a function of $\log r$. In particular, if
$r_1\le r\le r_2$, then
\begin{equation}
    M(r)\le M(r_1)^\theta M(r_2)^{1-\theta},
\qquad
\theta=
\frac{\log(r_2/r)}{\log(r_2/r_1)}.
\end{equation}
\end{theorem}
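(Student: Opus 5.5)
We will prove the Hadamard three-circle theorem through subharmonicity together with a maximum-principle comparison. The key fact is that for $H$ analytic and $\alpha\in\mathbb{R}$, the function $u_\alpha(z)=\log|H(z)|+\alpha\log|z|$ is subharmonic on a neighborhood of the annulus $A=\{r_1\le|z|\le r_2\}$. Indeed, $\log|H|$ is subharmonic, since locally it is either $-\infty$ or the real part of a branch of $\log H$ away from zeros, and it satisfies the sub-mean-value property at zeros. Meanwhile, $\log|z|$ is harmonic on $A$ because $0\notin A$. Consequently, $u_\alpha$ obeys the maximum principle on $A$: its maximum over $A$ is attained on $\partial A$, which consists of the two circles $|z|=r_1$ and $|z|=r_2$.

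Next we choose $\alpha$ so that the boundary values balance. We may assume $M(r_1),M(r_2)>0$; otherwise $H\equiv0$ on a circle, so $H\equiv0$ by the identity theorem and the claim is trivial. Set
\begin{equation}
\alpha=-\frac{\log M(r_2)-\log M(r_1)}{\log r_2-\log r_1},
\end{equation}
so that $\log M(r_1)+\alpha\log r_1=\log M(r_2)+\alpha\log r_2=:K$. On both boundary circles $u_\alpha\le K$, and the maximum principle then gives $u_\alpha\le K$ on all of $A$. Taking the maximum over $|z|=r$ yields $\log M(r)\le K-\alpha\log r$.

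Substituting $\log r=\theta\log r_1+(1-\theta)\log r_2$, where $\theta=\log(r_2/r)/\log(r_2/r_1)$, we get
\begin{equation}
K-\alpha\log r=\theta(K-\alpha\log r_1)+(1-\theta)(K-\alpha\log r_2)=\theta\log M(r_1)+(1-\theta)\log M(r_2).
\end{equation}
Exponentiating gives $M(r)\le M(r_1)^\theta M(r_2)^{1-\theta}$. Convexity of $\log M$ in $\log r$ follows by applying the same argument to every sub-annulus $[\rho_1,\rho_2]\subseteq[r_1,r_2]$: the resulting inequality is exactly the chord condition for convexity.

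The only delicate point is justifying the maximum principle when $H$ has zeros in $A$, where $u_\alpha=-\infty$. We handle this either by citing subharmonicity of $\log|H|$ directly, or by an elementary workaround. In the elementary version we take $\alpha=p/q$ rational, apply the ordinary maximum modulus principle to the single-valued analytic function $H(z)^q z^p$ on $A$, and pass to real $\alpha$ by continuity.
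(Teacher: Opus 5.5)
The paper does not prove this statement. It quotes the classical Hadamard three-circle theorem and uses it as a black box in \cref{lem:intermediate-fourier-band}, so there is no in-paper argument to compare against. Your proof is the standard textbook argument, and it is correct. You apply the maximum principle to $\log|H(z)|+\alpha\log|z|$, with $\alpha$ chosen so that the two boundary values agree, and then rewrite the resulting affine bound in $\log r$ as the $\theta$-weighted combination. Your identity $\log r=\theta\log r_1+(1-\theta)\log r_2$ and the ensuing algebra check out.

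Of your two ways of handling zeros, the rational-exponent one is cleaner, but you should write out the intermediate inequality. For $\alpha=p/q$ with $q>0$, the function $z^pH(z)^q$ is analytic near $A$ because $0\notin A$. Maximum modulus then gives $\log|H(z)|+\tfrac pq\log|z|\le\max_{i=1,2}\bigl(\log M(r_i)+\tfrac pq\log r_i\bigr)$ for $z\in A$. Both sides are continuous in the exponent, so the inequality holds for all real $\alpha$. Only at your specific $\alpha$ does the right-hand side collapse to $K$.

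Two small details are worth stating. First, the identity-theorem reduction needs a connected neighborhood of $A$; take the component containing $A$. Second, the convexity claim via sub-annuli uses $M(\rho)>0$ for every $\rho\in[r_1,r_2]$. This is automatic once $H\not\equiv 0$, since its zeros are isolated.

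Finally, the paper applies the theorem with outer radius $r_2=1$ to $\overline{H}$. That function need not be analytic at $w=1$, the image of $s=\infty$. So the statement as written does not literally cover that use. Your argument does, once you work on $r_0\le|w|\le\rho<1$ and let $\rho\to1$. This works because the bound on $|w|=\rho$ is uniform in $\rho$, and $\theta$ depends continuously on $\rho$.
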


\begin{lemma}
\label{lem:intermediate-fourier-band}
Given two probability density functions $f,g$ supported in $[0,1]$, let the function $z(x) \coloneq f(x)-g(x)$, and its characteristic function
\begin{equation}
    \hat{z}(t)=\int_{-\infty}^{\infty} (f(x)-g(x))e^{itx} \, dx ,
\end{equation}
Assume that $f,g \in C^1((0,1)) \cap C([0,1])$, and that the first $k+1$ moments of $f,g$ are $m_j,m_j'$ such that $\abs{m_j-m_j'} \le \eps$, and $m_j \le j!/d^j$, for $1 \le j \le k$, where $0< \eps \le d^{-k}$, $d \ge 2$. Then, for any $\sqrt{\gamma/k} \le h \le 1$, for some some constant $\gamma>0$,
\begin{equation}
    \int_{d/2\le\abs{t}\le d/h} \abs{\hat{z}(t)}^2 \, dt \le \frac{de^{\Omega(-kh^2)}}{h} \,.
\end{equation}
\end{lemma}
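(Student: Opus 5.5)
The plan is to treat $\hat z$ as an entire function of a complex variable $s$ and interpolate, via \cref{thm:three-circle}, between a small bound near the origin and a crude global bound farther out. Since $f,g$ are supported in $[0,1]$, $\hat z(s)=\int_0^1 z(x)e^{isx}\,dx$ is entire. We have $\hat z(0)=0$ because both are probability densities, so $H(s):=\hat z(s)/s$ is also entire. Two ingredients are needed.

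\emph{Inner circle.} On $|s|=r_1$ with $r_1$ of order $d$ (say $r_1=d/4$), the Taylor-remainder argument of \cref{lem:moment-matching-low-frequency} goes through verbatim for complex $s$. This is because $|e^{isx}|$ expansions only use $|s|^r m_r$ with $m_r\ge 0$, so
\begin{equation}
|\hat z(s)|\le \eps S_1(|s|)+3|s|^k/d^k .
\end{equation}
With $\eps\le d^{-k}$ this gives $M(r_1)\lesssim 4^{-k}$.

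\emph{Outer circle.} On $|s|=r_2$, the trivial bound is $|\hat z(s)|\le \int_0^1|z|e^{|{\rm Im}\, s| x}\,dx$. This can grow like $e^{r_2}$, which is far too large if $r_2\sim d/h$. To avoid this, I will instead use the moment bound $m_j\le j!/d^j$ to bound the moment generating function: $\int e^{|s|x}f\,dx\le\sum_j |s|^j m_j/j!\le \sum (|s|/d)^j$. This is finite only for $|s|<d$, so it cannot serve at radius $d/h$.

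I will therefore change the geometry. Rather than circles about $0$, I will apply the three-circle theorem to circles centered at a real point $t_0$ with $d/2\le t_0\le d/h$, or equivalently use the strip version (Hadamard three-lines / Phragmén–Lindelöf) in the imaginary direction. On the real line itself, integrating by parts with $f,g\in C^1$ gives $|\hat z(t)|\le (|z(0)|+|z(1)|+\|z'\|_{L^1})/|t|$, a polynomially bounded quantity of size $O(d)/|t|$. Near the origin, the moment expansion gives the exponentially small bound $4^{-k}$ on a complex disc of radius $\sim d/4$. Harmonic-measure interpolation (log-convexity of $\log|H|$) inside the region bounded by the disc of radius $d/4$ and the real segment up to $d/h$ will then transfer a fraction $\theta\approx \log(r_2/r)/\log(r_2/r_1)$ of the exponent $k\log 4$ to the point $t$. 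For $t\le d/h$, one has $\theta\gtrsim$ something like $h^2$ after optimizing the auxiliary radii. This yields
\begin{equation}
|\hat z(t)|\le (O(d)\cdot\mathrm{poly})^{1-\theta}\,4^{-k\theta}=e^{-\Omega(kh^2)}
\end{equation}
once $h\ge\sqrt{\gamma/k}$, where the condition is exactly what absorbs the polynomial prefactor. Squaring and integrating over a band of length $2d/h$ then gives the claimed $\frac{d}{h}e^{-\Omega(kh^2)}$.

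The main obstacle is the outer bound. One needs an a priori bound for $\hat z$ on a complex contour extending out to $|s|\approx d/h$ that is at most $e^{O(k)}$ (or polynomial), since naive bounds on the support $[0,1]$ give $e^{|{\rm Im}\,s|}$, which is huge. I would first try to restrict to a thin lens/strip $|{\rm Im}\,s|\le c\,d$, where $\int e^{c d x}f\,dx\le\sum c^j=O(1)$ by the moment hypothesis (using more moments than $k$ only through the tail bound of \cref{thm:brickwork-bounded}). I would then conformally map the strip-plus-disc region to an annulus so that \cref{thm:three-circle} applies, with the exponent $\theta$ computed from the strip width relative to $d/h$; this is where the $h^2$ dependence should emerge.
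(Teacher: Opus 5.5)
\textbf{Gap 1: without truncation, neither complex-plane bound holds.} You claim that the Taylor-remainder argument of \cref{lem:moment-matching-low-frequency} ``goes through verbatim for complex $s$''. It does not. For $\mathrm{Im}\,s<0$, the remainder of $e^{isx}$ after $k$ terms is only bounded by $\frac{|sx|^k}{k!}e^{|\mathrm{Im}\,s|x}$, and the hypotheses control $m_j$ only for $j\le k$. Here is a concrete counterexample. Start from $f=g$, with some slack in $m_j\le j!/d^j$. Move a mass $\delta\approx d^{-k}/2$ of $f$ into a smooth bump near $x=1$. Each moment changes by at most $\delta$, so every hypothesis still holds, yet $|\hat z(-id/4)|=|\int z(x)e^{dx/4}\,dx|$ is of order $\delta e^{d/4}$, which is astronomically large.

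The same objection defeats your strip bound. Only the first $k$ terms of $\sum_j (cd)^j m_j/j!$ are controlled. Invoking \cref{thm:brickwork-bounded}, which is not a hypothesis of the lemma anyway, still leaves $2^{-k}$ of mass at uncontrolled locations, contributing up to $2^{-k}e^{cd}$.

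The missing idea is to truncate before going complex. The paper works with $H(s)=\int_0^{L}z(x)e^{-sx}\,dx$, $L=2k/d$, and this fixes all three issues:
\begin{itemize}
\item On $[0,L]$ the remainder costs only $e^{|s|L}\le e^{k/4}$, which is beaten by $3k!/(2k)^k$ (\cref{lem:fourier-distance-from-moments}).
\item The moment-derived tail bound $\Pr[X\ge x]\le e^{-xd\ln(2)/2}$ for $x\le L$ gives $\int_0^L|z(x)|e^{ax}\,dx=O(1)$ for $a\le s_1 d$.
\item On the real axis the truncation costs only $\Pr[X>L]+\Pr[Y>L]\le 3\cdot 2^{-k}$.
\end{itemize}

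\textbf{Gap 2: the strip geometry cannot produce $h^2$.} Suppose you had a bound on a strip of half-width $cd$. The harmonic measure of a disc at the origin, seen from a real point $t$, decays like $e^{-\pi|t|/(2cd)}$. At $|t|=d/h$ you would therefore get $\theta=e^{-O(1/h)}$, and only $|\hat z(t)|\le e^{-k e^{-O(1/h)}}$, far weaker than $e^{-\Omega(kh^2)}$.

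What you overlooked is that one-sided support gives boundedness on a whole half-plane. In the paper's Laplace variable, $|e^{-sx}|\le 1$ for $x\ge 0$ and $\mathrm{Re}\,s\ge 0$. After truncation, $|H(s)|=O(1)$ on $\mathrm{Re}\,s\ge -s_1 d$. This half-plane contains the imaginary axis, where $\hat z(t)\approx H(-it)$.

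The M\"obius map $s\mapsto s/(s+2s_1 d)$ sends this half-plane onto the unit disk, and the disc $|w|\le r_0$ pulls back into $|s|\le s_0 d$, where $H$ is exponentially small. A point $s=-it$ with $|t|\le d/h$ lands at $|w|=(1+4s_1^2d^2/t^2)^{-1/2}$, so $\log(1/|w|)=\Theta(h^2)$. The three-circle theorem then gives $\theta=\Omega(h^2)$, which is exactly the source of $e^{-\Omega(kh^2)}$.

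Separately, your integration-by-parts bound is not available. The lemma gives no quantitative control of $\|z'\|_{L^1}$ or $z(0)$. It is also unnecessary, since $|\hat z(t)|\le 2$ on $\mathbb R$.
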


\begin{proof}
Our proof mainly involves an application of the Hadamard three-circle theorem (HTCT). We first define the function $H(s)$ on which we want to apply HTCT. Let $L > 0$, then
\begin{equation}
    H(s) \coloneq \int_{-L}^{L} (f(x)-g(x))e^{-sx} \, dx ,
\end{equation}
Note that $H$ is analytic, and that $\abs{H(-it)} \rightarrow \abs{\hat{z}(t)}$ as $L \rightarrow \infty$. This is going to be the intermediate `circle'. Now, we first consider the inner circle, which is given by $\abs{s} \le s_0d$ for some constant $s_0 > 0$. We can bound $\abs{H(s)}$ in this regime by using \cref{lem:fourier-distance-from-moments} with $s_0$ sufficiently small such that $\abs{H(s)} \le 2^{-c_0'k}$, setting $\eps = d^{-k}$. 

For the outer circle, we first need to prove a tail bound for $f,g$. Applying Markov's inequality on the random variable $X$ corresponding to $f(x)$,
\begin{align}
    \Pr[X \ge x] &= \Pr[X^j \ge x^j] \le \frac{\mathbb{E}[X^j]}{x^j} \\
    &= \frac{m_j}{x^j} \le \frac{j!}{(xd)^j} \le \left(\frac{j}{xd}\right)^j \\
    &\le e^{-xd\ln(2)/2} \,,
\end{align}
if we choose $j=\lfloor xd/2\rfloor $, assuming $x \le 2k/d$. Since $x \le 2k/d$ then $k \ge xd/2$, and thus $j = \lfloor xd/2\rfloor \le k$ is indeed a valid choice of $j$. We will later set $L=2k/d$, so we do not require the bound for $x > 2k/d$ at the moment.

We apply this result to the expansion of $\mathbb{E}[e^{aX}]$ (and $\mathbb{E}[e^{aY}]$), following an identity which holds for any non-negative random variable $X$ (restricted to $[0,L]$),
\begin{align}
\mathbb{E}[e^{aX}] &= 1 + a\int_{0}^{L} e^{ax} \Pr[X \ge x] \,dx \\
   &\le 1 + a\int_{0}^{L} e^{ax} e^{-xd\ln(2)/2} \,dx \\
   &\le 1 + \frac{a}{d\ln(2)/2-a}
\end{align}
assuming $a \le d\ln(2)/2$.
Using this result, we can show $H(s)$ is bounded if we assume $-a=\Re{s} \ge -s_1d$, where $0 < s_1 < \ln(2)/2$, since
\begin{align}
 \left|H(s)\right| &= \left|\mathbb{E}\left[e^{-sX}\right] - \mathbb{E}                            \left[e^{-sY}\right]\right| \\
                   &\le \left|\mathbb{E}\left[e^{-sX}\right]\right| + \left|\mathbb{E}\left[e^{-sY}\right]\right| \\
                   &\le \mathbb{E}\left[e^{-\Re{s}X}\right] + \mathbb{E}\left[e^{-\Re{s}Y}\right] \\
                   &\le 2 + \frac{2s_1}{\ln(2)/2-s_1} = O(1)
\end{align}
which is a constant, so let this quantity be $c_1$.
Now, we would like to apply HTCT with this condition on $s$, but $\Re{s} \ge -s_1d$ is a half-plane rather than a disk. So, we define the function \begin{equation}
w=\phi(s)\coloneq \frac{s}{s+2s_1d}, \qquad s=\phi^{-1}(w)=\frac{2s_1dw}{1-w},
\end{equation}
which maps the half-plane $\Re{s} \ge -s_1d$ to the unit disk $|w| \le 1$. Define
\begin{equation}
\overline{H}(w)\coloneq H\left(\frac{2s_1dw}{1-w}\right).
\end{equation}
Now, from the half-plane bound on $H(s)$, we get a bound on the outer circle over $w$ as follows.
\begin{equation}
\abs{\overline{H}(w)} \le c_1, \qquad |w|<1.
\end{equation}
Now, we transfer the inner circle bound from $H$ to $\overline{H}$. Recall that
\begin{equation}
    |H(s)|\le 2^{-c_0'k}, \qquad |s|\le s_0d.
\end{equation}
Set
\begin{equation}
r_0\coloneq \frac{s_0}{2s_1+s_0}.
\end{equation}
Then, if $|w|\le r_0$,
\begin{align}
\left|\phi^{-1}(w)\right| &= \left|\frac{2s_1dw}{1-w}\right| 
\le \frac{2s_1d|w|}{1-|w|} \le \frac{2s_1dr_0}{1-r_0} \\
&= s_0d.
\end{align}
Hence, we get the inner circle bound over $w$ bounded by some constant $r_0$ as
\begin{equation}
\abs{\overline{H}(w)}\le 2^{-c_0'k}, \qquad |w|\le r_0.
\end{equation}

Now, we are ready to apply HTCT to the function $\overline{H}$. Note that since the function is conformal, the analyticity of $H$ carries over to $\overline{H}$. So, for $r_0\le |w|<1$, we obtain the following upper bound from \cref{thm:three-circle} for any intermediate circle.
\begin{equation}
\abs{\overline{H}(w)} \le \left(2^{-c_0'k}\right)^{\theta(w)} \cdot c_1^{1-\theta(w)},
\end{equation}
where $\theta(w) = \log(1/|w|)(\log(1/r_0))^{-1}$. Thus, for some constants $\gamma,c_1' > 0$,
\begin{equation}
\abs{\overline{H}(w)} \le \gamma2^{-c_1'k\theta(w)}.
\end{equation}

Applying the inverse map $\phi^{-1}$ to obtain an upper bound corresponding to $H$, we get $\abs{H(s)} \le \gamma2^{-c_1'k\theta(\phi(s))}$. Now, we can directly bound $\abs{H(-it)}$. We first calculate $\abs{\phi(-it)}$ and lower bound $\theta(-it)$ to simplify the bound.
\begin{equation}
\abs{\phi(-it)} = \left|\frac{-it}{-it+2s_1d}\right| = \frac{|t|}{\sqrt{t^2+4s_1^2d^2}}.
\end{equation}
Then,
\begin{align}
\log\frac{1}{\abs{\phi(-it)}} &= \frac{1}{2}\log\left(1+\frac{4s_1^2d^2}{t^2}                                     \right) \\
                              &\ge \frac{c_2}{1+(t/d)^2}.
\end{align}
where $c_2 = 2s_1^2$. Since $r_0$ is also a constant, this then implies
\begin{equation}
\theta(-it) \ge \frac{c}{1+(t/d)^2} \,
\end{equation}
for some fixed constant $c$.
Thus, 
\begin{equation}
|H(-it)| \le \gamma2^{-ck/(1+(t/d)^2)} \,. \label{eq:65}
\end{equation}
By using Markov's inequality, we bound the approximation error between $\hat{z}(t)$ and $H(-it)$ as
\begin{align}
    \abs{\hat{z}(t)-H(-it)} &= \left|\int_L^\infty z(x)e^{itx} \,dx\right| \le \int_L^\infty \abs{z(x)} \,dx \le \Pr[X > L] + \Pr[Y > L] \\
    &\le \frac{m_k+m_k'}{L^k} \le \frac{3k!}{d^k} \cdot \frac{d^k}{(2k)^k} \le 3\cdot 2^{-k} \le 3e^{-k\ln(2)} 
\end{align}
where $X \sim f(x), Y \sim g(x)$, using $k! \le k^k$, and setting $L=2k/d$. Finally, integrating Eq.~\eqref{eq:65} squared and applying triangle inequality with $\abs{\hat{z}(t)}^2$ and the above expression proves our lemma.
\end{proof}
 
\section{Smoothing distributions}\label{section:smoothness}
Here, we show that there exists a \emph{smoothed} distribution which is close in TV distance to the original distribution, which quantitatively only depends on the $L_1$ norm of its first derivative.

\begin{theorem}[Smoothing the output distribution]
\label{thm:mollification}
Let $C$ be a brickwork Haar random circuit on $n$ qubits corresponding to the unitary $U$.  Let $b\in\{0,1\}^n$ be an arbitrary output string, and define $F \coloneq \left|\langle b|U|0^n\rangle\right|^2$. Let $f$ denote the probability density function of $F$, such that it is monotone decreasing and absolutely continuous, and assume $\|f'\|_{L^1} = a\, d$, where $d=2^n$ and $a \in \mathbb R$ is a constant. Then, for every integer $k\ge 1$ and every $0<\varepsilon<1$, there exists a probability density function $f_\ast \in C^\infty$ such that
\begin{equation}
d_{\mathrm{TV}}(f,f_\ast)\le \varepsilon \,,    
\end{equation}
and the corresponding characteristic function 
\begin{equation}
\widehat{f_\ast}(s)=0    
\end{equation}
for all $\abs{s} > a'd/\eps$ for some constant $a' > a > 0$.
\end{theorem}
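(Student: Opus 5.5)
Convolve $f$ with a mollifier whose Fourier transform has compact support, then use the derivative bound $\|f'\|_{L^1}=ad$ to control the translation error. Here "mollifier" means a nonnegative even function $\rho\in C^\infty$ with $\int\rho=1$ and $\hat\rho$ supported in $[-1,1]$. Such a $\rho$ exists: take $\rho=|\check\psi|^2$ normalized, with $\psi\in C_c^\infty([-1/2,1/2])$, so that $\hat\rho=\psi*\tilde\psi$ is supported in $[-1,1]$. Since $\rho$ is a Schwartz function, it has finite first absolute moment $\mu_1\coloneq\int|y|\rho(y)\,dy<\infty$.

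For a scale $\eta>0$ set $\rho_\eta(x)=\eta^{-1}\rho(x/\eta)$ and define $f_\ast\coloneq f*\rho_\eta$. Then $f_\ast\ge0$, $\int f_\ast=1$, and $f_\ast\in C^\infty$ because $\rho_\eta$ is smooth with integrable derivatives. Its characteristic function factorizes as $\widehat{f_\ast}(s)=\hat f(s)\hat\rho(\eta s)$. This vanishes whenever $|s|>1/\eta$. The same factorization gives $|\widehat{f_\ast}|\le|\hat f|$, since $|\hat\rho|\le\int\rho=1$, which is the property used in \cref{thm:combined-tv}.

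For the TV bound, write $f-f_\ast=\int\rho_\eta(y)\,[f(\cdot)-f(\cdot-y)]\,dy$. Minkowski's integral inequality gives
\begin{equation}
\|f-f_\ast\|_{L^1}\le\int\rho_\eta(y)\,\|f-f(\cdot-y)\|_{L^1}\,dy .
\end{equation}
For absolutely continuous $f$ of bounded variation, $\|f-f(\cdot-y)\|_{L^1}\le|y|\,\mathrm{TV}(f)$, where $\mathrm{TV}(f)$ is the total variation of $f$ on $\mathbb R$. Hence
\begin{equation}
d_{\mathrm{TV}}(f,f_\ast)\le\tfrac12\,\eta\mu_1\,\mathrm{TV}(f).
\end{equation}

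The main obstacle is the jump of $f$ at $0$. On $(0,\infty)$ the density is monotone decreasing, so $\int_0^\infty|f'|=f(0^+)$. The jump from $0$ to $f(0^+)$ at the origin adds another $f(0^+)$, giving $\mathrm{TV}(f)=2f(0^+)$. So I would interpret the hypothesis $\|f'\|_{L^1}=ad$ as controlling $f(0^+)$ (equivalently, $f(0^+)\le ad$). Alternatively, I would take $\|f'\|_{L^1}$ to mean the distributional derivative on $\mathbb R$, in which case $\mathrm{TV}(f)=ad$ directly. Either way $\mathrm{TV}(f)\le 2ad$, and
\begin{equation}
d_{\mathrm{TV}}(f,f_\ast)\le \eta\mu_1 a d .
\end{equation}
If instead the bound must be proved from the convolution structure alone, I would fall back on the continuity of the monotone $f$ away from $0$ and handle the jump by this same estimate.

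Finally choose $\eta=\varepsilon/(\mu_1 ad)$, which gives $d_{\mathrm{TV}}(f,f_\ast)\le\varepsilon$. The spectral cutoff is then $1/\eta=\mu_1 a d/\varepsilon$, so $\widehat{f_\ast}(s)=0$ for $|s|>a'd/\varepsilon$ with $a'\coloneq\max(\mu_1,2)\,a>a$. The same $\rho_\eta$ is used for $g$ (Porter-Thomas, with $\|g'\|_{L^1}=d$ up to the jump), giving the common mollifier needed in \cref{thm:combined-tv}. The integer $k$ plays no role, so the statement holds uniformly in $k$.
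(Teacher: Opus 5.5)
Your proposal is correct and follows essentially the same route as the paper. Both arguments convolve $f$ with a nonnegative mollifier of finite first absolute moment and compactly supported Fourier transform (the paper uses $\frac{3}{2\pi}(\sin x/x)^4$ instead of $|\check\psi|^2$), bound the $L^1$ translation error by $|y|$ times the variation of $f$, and take the scale proportional to $\varepsilon/d$ so that $\widehat{f_\ast}$ vanishes beyond $O(d/\varepsilon)$.

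Your version is in fact more careful than the paper's. The paper bounds only $\|f_h-f\|_{L^1(0,1)}$ and uses $\int_0^1|f(p-t)-f(p)|\,dp\le|t|\,\|f'\|_{L^1}$, which ignores both the jump of $f$ at $0$ and the mass of $f_h$ lying outside $[0,1]$. Working with the total variation of $f$ on $\mathbb{R}$, as you do, handles both issues, at the cost of a factor of $2$ in the frequency cutoff.
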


\begin{proof}
Let $\rho\in C^\infty$ be a fixed nonnegative function known as a mollifier such that $\int_{\mathbb R}\rho(t)\,dt=1$, $\int_{\mathbb R}\abs{t}\rho(t)\,dt = b \le 1$, and its Fourier transform $\widehat{\rho}(s) = 0$ for $\abs{s} \ge c$ for some constant $c > 0$.  Such functions indeed exist, for instance we can check that the function $J(x)=\frac{3}{2\pi}\left(\frac{\sin(x)}{x}\right)^4$ satisfies these properties, having $b=3\ln(2)/\pi$ and $c=4$. Now, for some $h \in \mathbb{R}^+$, let
\begin{equation}
\rho_h(t)=h^{-1}\rho(t/h) \,,    
\end{equation}
and consider the convolution of $f$ with this mollifier which we denote as
\begin{equation}
f_h=f*\rho_h \,.    
\end{equation}
We can view the function $f_h$ as a `smoothed' form of $f$. Since $\rho_h\ge0$ and $\int_{\mathbb R}\rho_h=1$, the function $f_h$ is nonnegative and satisfies
\begin{equation}
\int_{\mathbb{R}} f_h(p)\,dp=\int_{\mathbb{R}} \int_{\mathbb{R}}\rho_h(t)f(p-t) \,dt \,dp= \left(\int_{\mathbb{R}}\rho_h(t)\,dt\right)\left(\int_{\mathbb{R}}f(p-t) \,dp\right)=1.
\end{equation}
where the second equality follows from Fubini's theorem, the change of variables $u=p-t$, and integrating over $du$. We also know that convolution with a mollifier $\rho_h \in C^{\infty}$ results in a $C^{\infty}$ function. Thus, $f_h\in C^\infty(\mathbb{R})$ is a valid probability density function.

Next, we bound the TV distance between $f$ and $f_h$. Since $f$ is absolutely continuous and $f'\in L^1(0,1)$, we have the bound
\begin{equation}
\int_0^1 |f(p-t)-f(p)|\,dp
\le
|t|\,\|f'\|_{L^1}    
\end{equation}
for all $t\in\mathbb R$. Therefore
\begin{align}
\|f_h-f\|_{L^1(0,1)} &= \int_{0}^1|f_h(p)-f(p)| \,dp \\
    &= \int_0^1\left|\int_{\mathbb R}\rho_h(t)f(p-t)\,dt-f(p)\right| \,dp \\
    &= \int_0^1\left|\int_{\mathbb R}\rho_h(t)\bigl(f(p-t)-f(p)\bigr)\,dt\right| \,dp \\
    &\le \int_{\mathbb R}\rho_h(t)\left(\int_0^1 |f(p-t)-f(p)|\,dp\right) \,dt \label{eq:77} \\
    &\le \|f'\|_{L^1}\int_{\mathbb R}|t|\rho_h(t) \,dt. \\
    &= \|f'\|_{L^1}\int_{\mathbb R}|t|\,h^{-1}\rho(t/h) \,dt \label{eq:79} \\
    &= \|f'\|_{L^1}h\int_{\mathbb R}|u|\rho(u) \,du \\
    &= hb\|f'\|_{L^1} \,,
\end{align}
where we substitute $u=t/h$ in Eq.~\eqref{eq:79}, and the last line follows from the definition of $\rho$ and $b$.  Now, since $d_{\mathrm{TV}}(f,f_h)=\frac{1}{2}\|f-f_h\|_{L^1}$, it suffices to choose
\begin{equation}
h = \frac{2\eps} {b\|f'\|_{L^1}} \,,   
\end{equation}
which implies
\begin{equation}
d_{\mathrm{TV}}(f,f_h)\le \eps.  
\end{equation}

Now, we show that the characteristic function $\widehat{f_h}$ is bounded according to this value of $h$. By the scaling and convolution identities for the Fourier transform and the definition of $\rho_h(t)$, we know $\widehat{\rho_h}(s)=\widehat{\rho}(hs)$ and $\widehat{f_h}(s)=\widehat{f}(s)\widehat{\rho}(hs)$. Thus, $\widehat{\rho}(s)=0 \implies \widehat{\rho_h}(s/h)=0 \implies \widehat{f_h}(s/h)=0$. 
Since $\widehat\rho(s)=0$ whenever $|s|>c$, $\widehat{f_h}(s)=0$ for $\abs{s}>c/h$.
With the above choice of $h$,
\begin{equation}
 \widehat{f_h}(s)=0\,
\qquad
\text{for }
\abs{s}>\frac{abcd}{2\varepsilon} \,,
\end{equation}
so we define $a'\coloneq abc/2 > 0$, which is a constant independent of $n$ and $\varepsilon$. In particular, for $J(x)$, $a'=6a\ln(2)/\pi \approx 1.32a$. This completes our proof.
\end{proof}

The remaining assumption we need to prove is that $\|f'\|_{L^1} \le O(d)$ where $f$ is the output probability distribution of a polynomial depth random quantum circuit that forms a $k$-design. We first reduce it to proving an inverse moment bound as follows.
\begin{lemma}[Total variation bound]
\label{lem:reduction-TV-to-R}
Let $U \sim \mathcal{C}(l)$ be a brickwork random circuit on $n$ qubits. Let $d=2^n$, and consider a fixed output string $b\in\{0,1\}^n$ and its corresponding probability $F=\left|\langle b|U|0^n\rangle\right|^2$. Let $f$ denote the density of the random variable $F$ over a randomly sampled circuit. Then
\begin{equation}
\|f'\|_{L^1} \le O(d) \,,
\end{equation}
assuming an inverse moment bound i.e. $\mathbb{E}[R^{-1}] \le O(d)$, as proved in \cref{thm:inverse-moment}, where $R$ is the random variable defined in Eq.~\eqref{eq:r-def}.
\end{lemma}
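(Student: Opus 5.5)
We reduce the problem to the structure of the last layer of the circuit. Write $U = G\,V$, where $G$ is the final brickwork layer, a tensor product of independent Haar-random two-qubit gates, and $V$ is the rest of the circuit, independent of $G$. The bitstring $b$ splits as $b=(b_{12},b_{\rm rest})$, where $b_{12}$ sits on the two qubits acted on by a fixed gate $g\in\SU(4)$ of the last layer and $b_{\rm rest}$ covers the remaining qubits. Conditioning on $V$ and on every last-layer gate other than $g$, we get
\begin{equation}
F = \bigl|\langle b_{12}|\, g\, |\psi\rangle\bigr|^2 ,
\end{equation}
where $|\psi\rangle$ is an unnormalized vector in $\mathbb{C}^4$ determined by the conditioned data. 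Let $R:=\|\psi\|^2$; we take this to be the quantity in Eq.~\eqref{eq:r-def}. It is the probability mass entering gate $g$. Since $g$ is Haar on $\SU(4)$, the vector $g|\psi\rangle/\sqrt{R}$ is Haar-random on the unit sphere of $\mathbb{C}^4$. Consequently
\begin{equation}
F \mid R \;\sim\; R\cdot \mathrm{Beta}(1,3),
\end{equation}
with conditional density $f_R(x) = \frac{3}{R}\bigl(1-\frac{x}{R}\bigr)^2\mathbf{1}_{[0,R]}(x)$. Averaging over $R$ gives the mixture representation $f(x)=\mathbb{E}_R[f_R(x)]$.

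We then differentiate under the expectation. On $(0,R)$ the conditional density has derivative $f_R'(x) = -\frac{6}{R^2}\bigl(1-\frac{x}{R}\bigr)$, it is continuous at $x=R$ (where it vanishes), and it has a jump of height $3/R$ at $x=0$. On $(0,\infty)$, where $f$ lives, this gives
\begin{equation}
\|f_R'\|_{L^1(0,\infty)} = \int_0^R \frac{6}{R^2}\Bigl(1-\frac{x}{R}\Bigr)dx = \frac{3}{R}.
\end{equation}
The boundary jump at $0$ is not part of $f'$ on $(0,1)$, which is the setting of \cref{thm:mollification}. If one prefers to count it, it contributes another $3/R$. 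Either way, Minkowski's integral inequality, or Fubini applied to $f'(x)=\mathbb{E}_R[f_R'(x)]$ (justified by Tonelli once $\mathbb{E}[R^{-1}]<\infty$), gives
\begin{equation}
\|f'\|_{L^1} \le \mathbb{E}_R\bigl[\|f_R'\|_{L^1}\bigr] \le 6\,\mathbb{E}[R^{-1}] = O(d).
\end{equation}
The last step uses the assumed inverse moment bound from \cref{thm:inverse-moment}. As a by-product, since every $f_R'\le 0$ on $(0,\infty)$, this representation also shows that $f$ is monotone decreasing and absolutely continuous, which are the other hypotheses of \cref{thm:mollification}.

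The main obstacle is justifying the exchange of derivative and expectation. The map $R\mapsto f_R'$ is singular as $R\to 0$, so it is integrable only because $\mathbb{E}[R^{-1}]<\infty$. I would handle this at the level of distributions. For a test function $\varphi$,
\begin{equation}
\int f\varphi' = \mathbb{E}_R\int f_R\varphi' = -\mathbb{E}_R\int f_R'\varphi,
\end{equation}
and $\mathbb{E}_R|f_R'|$ is integrable in $x$ with total mass at most $3\,\mathbb{E}[R^{-1}]$. Hence the distributional derivative of $f$ is an $L^1$ function, $f$ is absolutely continuous, and the bound above follows.

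The substantive work is the inverse moment bound itself, which is a small-ball anticoncentration statement for $R$ and is deferred to \cref{thm:inverse-moment}. In this lemma we only assume it.
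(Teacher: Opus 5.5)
Your proposal is correct and follows the paper's proof essentially step for step: you isolate one final-layer gate, use $F\mid R\sim R\cdot\mathrm{Beta}(1,3)$ to write $f$ as the mixture $\mathbb{E}\bigl[\tfrac{3}{R}(1-\tfrac{x}{R})^2\mathbf{1}_{\{x<R\}}\bigr]$, differentiate under the expectation, and integrate to get $\|f'\|_{L^1(0,1)}=3\,\mathbb{E}[R^{-1}]$. Your distributional (Fubini--Tonelli) justification of the exchange is slightly more careful than the paper's, as is your explicit count of the jump of size $3\,\mathbb{E}[R^{-1}]$ at $x=0$, which the translation estimate in the mollification step needs and the paper drops; the resulting constant $6$ is still $O(d)$.
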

\begin{proof}
Let
\begin{equation}
U = G_*C \,   
\end{equation}
where $G_*\in \SU(4)$ is the final two-qubit Haar random gate on qubits $(0,1)$, and $C$ corresponds the remaining circuit. Consider the marginal probability over all outputs of $G_*$, i.e.
\begin{equation}\label{eq:r-def}
R=\sum_{z \in\{0,1\}^2}\left|\langle z,\bar{b}|C|0^n\rangle\right|^2 .   
\end{equation}
where $\bar{b}$ denotes the value of $b$ outside of qubits $(0,1)$. Conditioned on $C$, the four amplitudes entering $G_\ast$ form a fixed vector $v\in\mathbb C^4$ with $\|v\|_2^2=R$. We know the output distribution of a Haar random unitary from $\SU(4)$ follows a $\operatorname{Beta}(1,3)$ distribution. Therefore
\begin{equation}
F_{G_\ast}\,|\,R = R B ,
\qquad
B\sim \operatorname{Beta}(1,3) \,,   
\end{equation}
where
\begin{equation}
f_B(b)=3(1-b)^2,\qquad 0<b<1 \,. 
\end{equation}
Conditioning on $R=r$, the random variable $F$ is a rescaled Beta variable.
Hence, the conditional density of $F$ is
\begin{equation}
f_{F\mid R=r}(p) = \frac1r\,f_B\!\left(\frac pr\right) =\frac3r\left(1-\frac pr\right)^2 \,,\qquad (0<p<r).    
\end{equation}
since the cumulative density is $\Pr(P \le p) = \Pr(rb \le p) = \Pr(b \le p/r) = F_B(p/r)$, and differentiating gives us the probability density. Now, by the definition of conditional probability, we can average over all $R$ to get the overall probability density
\begin{equation}
f(p) = \mathbb{E}\!\left[f_{F\mid R}(p)\right] = 3\mathbb{E}\!\left[\frac1R\left(1-\frac pR\right)^2 \right] \,,\qquad (0<p<r).  \label{eq:dist_expR}
\end{equation}
Differentiating under the expectation, we get
\begin{equation}
f'(p) = -6\mathbb{E}\!\left[\frac1{R^2} \left(1-\frac pR\right) \right] \,,\qquad (0<p<r).    
\end{equation}
The differentiation under the expectation is justified because, on the event
$\{R>p\}$, $\frac1{R^2}\left|1-\frac pR\right|\le\frac1{R^2}\le\frac1{p^2},$ which is integrable for each fixed $p>0$.
Now, moving the expectation outside the integral
\begin{align}
\|f'\|_{L^1} 
    &= 6\int_0^1\mathbb{E}\left[\frac1{R^2}\left(1-\frac pR\right)\right]\,dp \,,\qquad (0<p<r). \\
    &=6\mathbb{E}\left[\int_0^R\frac1{R^2}\left(1-\frac pR\right)\,dp\right] \\
    &=6\mathbb{E}\left[\frac1{R^2}\cdot \frac R2\right] \\
    &=3\mathbb{E}[R^{-1}]. \\
    &= O(d)\,.
\end{align}

\end{proof}
\noindent It only remains to prove that $\mathbb{E}[R^{-1}] \le O(d)$.

\section{Inverse moment bound}

\begin{theorem}[Inverse moment bound]
\label{thm:inverse-moment}
Let $U \sim \mathcal{C}(l)$ be a brickwork random circuit on $n$ qubits such that it is drawn from a $d^{-k}$-approximate $k$-design. Factorize $U=G_*C$ such that $G_*$ is a fixed gate in the final layer of the circuit, and let $S_*$ be the pair of output qubits of $G_*$. We define
\begin{equation}
R\coloneq \sum_{z\in\{0,1\}^{S_*}} |\langle z,x|C|0^n\rangle|^2,    
\end{equation}
where $x$ is a fixed string on $\overline{S_*}$. Let $d=2^n$ and $k\ge 8$. Then, there exists a constant $K>0$, such that
\begin{equation}
\mathbb E[R^{-1}] \le Kd \,.
\end{equation}
\end{theorem}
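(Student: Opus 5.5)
The plan is to write $\mathbb E[R^{-1}]$ through its lower tail,
\begin{equation*}
\mathbb E[R^{-1}] = \int_0^\infty \Pr\!\left[R < u^{-1}\right] du \le Kd/2 + \frac{1}{d}\int_{K/2}^{\infty} \Pr\!\left[R < \frac{1}{\delta d}\right] d\delta' ,
\end{equation*}
after the substitution $u=\delta d$. The goal is then a polynomial \emph{small-ball} bound of the form $\Pr[R<\delta/d]\le C\delta^{1+c}$ for $\delta\le 1$ and some $c>0$. With such a bound the tail integral is $O(d)$. Since $\mathbb E[R]\approx 4/d$, the content lies entirely in the lower tail. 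Approximate-design moments (\cref{thm:brickwork-design}) only control upper tails and give no control of the lower tail, so I would not rely on them here. Instead I would use the per-gate Haar structure of the brickwork architecture, as in \cref{lem:reduction-TV-to-R}.

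\textbf{Step 1: peel off the layer before $G_*$.} Let $S_*=\{1,2\}$. The penultimate layer contains two independent Haar gates $G_a$ on $(n,1)$ and $G_b$ on $(2,3)$. Condition on everything else. Then the four-qubit input to $G_a\otimes G_b$ is a fixed (unnormalized) vector $u$ on qubits $n,1,2,3$, contracted against the remaining fixed bits of $x$. We have
\begin{equation*}
R=\bigl\|(\langle x_n|\otimes I_{12}\otimes\langle x_3|)(G_a\otimes G_b)\,u\bigr\|^2 .
\end{equation*}
Integrating over $G_b$ alone, with $G_a$ fixed, reduces this to a Haar-rotated quadratic form. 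Projecting one qubit of a Haar-random state of a two-qubit system onto a fixed bit keeps a rank-$2$ subspace of $\mathbb C^4$. For a pure input this gives $R=S\cdot B'$, where $S=\|u\|^2$ and $B'\sim\mathrm{Beta}(2,2)$. For mixed (entangled) inputs, $R$ is a positive combination of such variables with total weight $S$. In both cases
\begin{equation*}
\Pr[R<\eta S]\le C\eta^{2}.
\end{equation*}
This is the step where the rank is at least $2$, and hence where the exponent exceeds $1$. I would verify it by writing $R$ as $\sum_i\lambda_i|g_i|^2$ for a Gaussian-like vector $g$. Applying the same reasoning to $G_a$ as well only improves the exponent.

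\textbf{Step 2: control the scale $S$.} The variable $S$ is again a marginal probability of the shallower circuit. Now two fewer qubits are fixed, so $S$ has typical size $16/d$ and the same structure. I would combine the bounds by conditioning:
\begin{equation*}
\Pr[R<\delta/d]\le\Pr[S<\delta^{1/2}/d]+C\delta .
\end{equation*}
Iterating Step 1 on $S$ with the light cone shrinking backwards gives, after $j$ peelings, a marginal on $n-2j$ fixed bits. Each peeling contributes a rank-$\ge2$ Beta factor. After $O(n)$ peelings all bits are free and the marginal equals $1$ deterministically. Thus $R$ stochastically dominates a product $\frac{c}{d}\prod_j B'_j$, with $B'_j$ independent $\mathrm{Beta}(2,2)$-type variables, up to constant rescalings. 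I would then bound $\mathbb E[R^{-1}]\le\frac{d}{c}\prod_j\mathbb E[(B'_j)^{-1}]$. The depth assumption $k\ge 8$, together with \cref{thm:brickwork-design}, is used to guarantee that the light cone is fully covered.

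\textbf{Main obstacle.} The difficulty is the last product. Each $\mathbb E[(B'_j)^{-1}]$ is a constant larger than $1$, so a naive product over $O(n)$ layers gives $\mathrm{poly}(d)$ rather than $O(d)$. The fix I would try first is to track normalized ratios $R_j/\mathbb E R_j$ instead of raw Beta factors. The Beta factors at each level are exactly the ratios of successive marginals, so $\mathbb E[R_{j}/R_{j+1}]$ equals the dimension ratio $1/4$, and the inverse moments of these ratios are $4\cdot\frac{3}{2}\cdot$ (a correction). I would then aim to show that the corrections are summable, i.e. $1+O(2^{-j})$, because the marginal on many free qubits concentrates. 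Failing that, the fallback is to perform only $O(1)$ peelings, which supplies the rank-$2$ small-ball at scale $S$, and to lower-bound $S$ by Paley--Zygmund using second and fourth design moments. This costs a $\Pr[S<c/d]$ term that would need a separate argument to show it decays polynomially in $\delta$.
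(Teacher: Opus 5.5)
Your proposal does not close. You diagnose the obstacle correctly: chaining the Beta-type factors all the way out gives $\mathbb E[R^{-1}]\lesssim 3^{n}=d^{\log_2 3}$. But neither proposed fix is carried out, and the real gap is the premise that made you set design moments aside. Design moments \emph{do} control lower tails. They do not do so for $R$ itself, but they do for the mass $R_m$ on a larger contiguous window $S_m$ with $r_m=2^{|S_m|}$ free outcomes. Write $R_m-\frac{r_m}{d}=\operatorname{tr}\bigl(C|0^n\rangle\langle 0^n|C^\dagger(\Pi_m-\frac{r_m}{d}I)\bigr)$. Every permutation with a fixed point vanishes in the Haar $k$-th moment because $\operatorname{tr}(\Pi_m-\frac{r_m}{d}I)=0$. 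So for even $k$ and a $d^{-k}$-approximate design, $\mathbb E[(R_m-r_m/d)^k]\le k!\,r_m^{k/2}/d^k$, whence $\Pr[R_m<r_m/(2d)]\le 2^k k!\,r_m^{-k/2}$. The paper then peels only out to a window whose size is tied to the threshold $t$. Each step frees two qubits by inverting one final-layer gate and the adjacent penultimate gate. \cref{app:two-gate-beta} gives $\mathbb E[(R_{i+1}/R_i)^{3/2}\mid\mathcal F_i]\le M_{3/2}^2=64$, hence $\mathbb E[(R_m/R)^{3/2}]\le(r_m/4)^3$. A union bound plus Markov then yields $\Pr[R\le t/d]\le 2^kk!\,r_m^{-k/2}+\frac{\sqrt2}{32}t^{3/2}r_m^{3/2}$. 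Choosing $r_m\asymp t^{-3/11}$ and $k=8$ makes both terms $O(t^{12/11})$, which integrates against $t^{-2}$. This is where $k\ge 8$ enters, as the order of the centered moment, not in covering a light cone.

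Your fallbacks do not substitute for this.
For (a), the two-gate bound is a worst case over the conditioned boundary state $\rho$. For pure $\rho$ the ratio is a nondegenerate Beta-type variable however large the window is: per freed qubit, $\mathbb E[B^{-1}]=3$ versus $1/\mathbb E[B]=2$. Corrections of size $1+O(2^{-j})$ would therefore require showing that $\rho$ is nearly maximally mixed, with enough control for inverse moments. That is a concentration statement of exactly the design type you discarded.
For (b), Paley--Zygmund with second and fourth moments gives only $\Pr[S<c/d]\le 1-c'$, a constant. Decay in $t$ requires letting the window grow as $t\to 0$ and using higher moments, as above.
Your additive iteration also loses the exponent: the same split applied to $S$ gives $\Pr[S<\delta^{1/2}/d]\le\Pr[S'<\delta^{1/4}/d]+C\delta^{1/2}$.
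Finally, Step 1 is not correct as written. $C$ still contains the final-layer gates on $(3,4)$ and $(n-1,n)$, so $x_3$ and $x_n$ are read out after those gates and cannot be contracted directly after $G_b,G_a$. The paper's ordering avoids this: it frees a final-layer gate together with its penultimate neighbour, so every inverted gate has all its outputs free.
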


\begin{proof}
We show that the probability mass $R$ is sufficiently anticoncentrated according to the following bound
\begin{equation}
    \Pr\!\left[R\le \frac{t}{d}\right]\le At^{12/11}, \qquad 0<t\le 1 \,,
    \label{eq:inverse-block-small-ball}
\end{equation}
for some constant $A>0$. The desired inverse moment bound on $R$ will then follow by expressing it as an integral which uses this bound.

The proof broadly analyzes how the output probability mass changes, with respect to the size of a contiguous interval of qubits, considering only the \emph{last two layers} of the random circuit, and starting with a fixed gate $G_*$. If we can show it doesn't increase too much as we increase the size of the interval, we can then use the design property of the circuit to lower bound the mass on a larger interval. Combining these two facts will allow us to lower bound the probability mass $R$ going into $G_*$. We first start by understanding how the mass locally increases as we add two consecutive gates $G$ and $H$ in the circuit over the last two layers, and then iterate.

\begin{figure}[h]
  \centering
  \resizebox{0.95\linewidth}{!}{
  \begin{tikzpicture}[
    x=1cm,y=1cm,font=\small,
    qwire/.style={draw=black!45,line width=.6pt},
    qfree/.style={draw=black!45,line width=1.2pt},
    tracedout/.style={densely dashed,draw=black!45,line width=.6pt},
    output/.style={densely dashed,draw=black!50,line width=.6pt},
    box/.style={draw,fill=white,line width=.7pt,
                minimum height=.7cm,inner sep=4pt,align=center,rounded corners},
    gate/.style={box,minimum width=2.05cm,minimum height=.8cm},
    ghostgate/.style={gate,draw=black!35,text=black!45,densely dotted},
    cancel/.style={draw=black!45,line width=.7pt},
    >={Latex[length=2mm]}
  ]

  \draw[->] (-.5,.65) -- (-.5,3.95)
    node[midway,rotate=90,above,font=\footnotesize] {circuit time};
  \draw[output] (0,4.35) -- (14.1,4.35);
  \node[fill=white,inner sep=4pt,align=center,font=\footnotesize]
    at (7.05,4.35) {final outputs\\(fixed level)};

  \node[align=center] at (2.75,5.65)
    {\textbf{Before:} $S_0=S_*=\{q_3,q_4\}$\\[2pt]
     $R_0=R,\qquad r_0=4$};
  \node[font=\footnotesize] at (1.45,4.95) {fixed to $x_1x_2$};
  \node[font=\footnotesize] at (4.05,4.95) {already free};

  \foreach \x in {.8,2.1,3.4}
    \draw[qwire] (\x,.45) -- (\x,4.35);
  \draw[qfree] (3.4,2.15) -- (3.4,4.35);

  \draw[tracedout] (4.7,.45) -- (4.7,4.35);

  \node[box,minimum width=3.7cm] at (2.1,.1)
    {$\rho$};

  \node[gate] (Gbefore) at (2.75,1.75)
    {$G_{23}$\\[-1pt]{\footnotesize preceding layer}};
  \node[gate] (Hbefore) at (1.45,2.95)
    {$H_{12}$\\[-1pt]{\footnotesize final layer}};
  \node[ghostgate] (starbefore) at (4.05,2.95)
    {$G_*$\\[-1pt]{\footnotesize already free}};

  \draw[cancel]
    ([xshift=-2pt,yshift=-2pt]starbefore.south west) --
    ([xshift=2pt,yshift=2pt]starbefore.north east);

  \node[box,minimum width=2.15cm] at (1.45,4.35)
    {$|x_1x_2\rangle\langle x_1x_2|$};
  \node[box,minimum width=2.15cm] at (4.05,4.35)
    {$I_3\otimes I_4$};

  \foreach \x/\lab in {.8/1,2.1/2,3.4/3,4.7/4}
    \node[fill=white,inner sep=2pt,font=\footnotesize]
      at (\x,.85) {$\lab$};

  \draw[->,line width=.9pt] (5.75,2.5) -- (8.35,2.5)
    node[midway,above=6pt,align=center] {free $\{q_1,q_2\}$}
    node[midway,below=6pt,align=center] {enlarge $S_0$ to $S_1$};

  \begin{scope}[xshift=8.6cm]
    \node[align=center] at (2.75,5.65)
      {\textbf{After:} $S_1=\{q_1,q_2,q_3,q_4\}$\\[2pt]
       $R_1=s,\qquad r_1=4r_0=16$};
    \node[font=\footnotesize] at (2.75,4.95)
      {all outputs in $S_1$ free};

    \foreach \x in {.8,2.1,3.4}
      \draw[qfree] (\x,.45) -- (\x,4.35);
    \draw[tracedout] (4.7,.45) -- (4.7,4.35);

    \node[box,minimum width=3.7cm] at (2.1,.1)
      {$\rho$};

    \node[ghostgate] (Gafter) at (2.75,1.75)
      {$G_{23}$\\[-1pt]{\footnotesize preceding layer}};
    \node[ghostgate] (Hafter) at (1.45,2.95)
      {$H_{12}$\\[-1pt]{\footnotesize final layer}};
    \node[ghostgate] (starafter) at (4.05,2.95)
      {$G_*$\\[-1pt]{\footnotesize already free}};

    \foreach \g in {Gafter,Hafter,starafter}
      \draw[cancel]
        ([xshift=-2pt,yshift=-2pt]\g.south west) --
        ([xshift=2pt,yshift=2pt]\g.north east);

    \node[box,minimum width=5.1cm,fill=black!5] at (2.75,4.35)
      {$I_1\otimes I_2\otimes I_3\otimes I_4$};

    \foreach \x/\lab in {.8/1,2.1/2,3.4/3,4.7/4}
      \node[fill=white,inner sep=2pt,font=\footnotesize]
        at (\x,.85) {$\lab$};
  \end{scope}

  \node at (2.75,-.85) {$R_0=sP_\rho$};
  \node at (11.35,-.85) {$R_1=s$};

  \node at (7.05,-1.55) {$
    P_\rho=\operatorname{Tr}\!\left[
      (|x_1x_2\rangle\langle x_1x_2|\otimes I_3)
      H_{12}G_{23}\rho G_{23}^{\dagger}H_{12}^{\dagger}
    \right]
  $};
  \end{tikzpicture}
  }
\caption{\textbf{Bounding the probability mass over larger intervals.}
Initially $S_*=S_0=\{q_3,q_4\}$ is free, so $G_*$ contributes to both masses. The neighboring qubits $q_1,q_2$ are fixed until $S_0$ is enlarged to $S_1=\{q_1,q_2,q_3,q_4\}$. This involves inverting the effect of applying $H_{12}$ and $G_{23}$ on the qubits, or \emph{freeing} the fixed qubits. $R_0$ and $R_1$ are the probability masses on $S_0$ and $S_1$ respectively. We only analyze the effect of the gates in the last two layers of the circuit, freeing the qubits from right to left in this figure.}
\label{fig:boundary-peeling}
\end{figure}

Consider three consecutive qubits $q_1,q_2,q_3$ in the circuit $C$,
and let $G_{23},H_{12}$ be Haar random gates applied in order as
in \cref{fig:boundary-peeling}. Let $x_1,x_2\in\{0,1\}$, and let $\rho$ be the initial density matrix on qubits $q_1,q_2,q_3$. Define
\begin{equation}
P_\rho\coloneq\operatorname{Tr}\!\left[(\ket{x_1x_2}\bra{x_1x_2}\otimes I_3) H_{12}G_{23}\rho G_{23}^{\dagger}H_{12}^{\dagger} \right].
\end{equation}
In other words, $P_\rho\in[0,1]$ is the probability of measuring
the fixed two-bit string $x_1x_2$ on qubits $q_1,q_2$, when $q_3$ (and $q_4$) is left unmeasured.

Now, let $S_i$ be the set of qubits we consider in the $i$-th step. We start by setting $S_0=S_*$, which contains the qubits corresponding to the initial gate $G_*$. So in \cref{fig:boundary-peeling}, $S_0=\{q_3,q_4\}$. We also define the probability mass on the set $S_i$ analogously as
\begin{equation}
R_{i} \coloneq \sum_{z\in\{0,1\}^{S_i}}\bigl|\langle z,x_{\overline S_i}|C|0^n\rangle\bigr|^2.    
\end{equation}
Thus, $R_{0}=R$ and $R_{n}=1$. We also get the useful relation that
\begin{equation}
R_{i+1}=\frac{R_{i}}{P_\rho} \,,  
\end{equation}
where $P_\rho$ is defined by assigning qubit $q_3$ to $S_i$ and qubits $q_1,q_2$ to $S_{i+1}$. In \cref{fig:boundary-peeling}, we will set $S_1 = \{q_1,q_2,q_3,q_4\}$ for instance. So, the probability mass does not decrease as we increase the size of the interval. We increase the number of qubits and gates considered in $S_i$ by two in each step. Note that $R_{i+1}=0 \implies R_{i}=0$.

Instead of bounding the probability mass itself, we bound the inverse expected moment of $R_m$ as we iterate over larger $S_i$ by applying \cref{app:two-gate-beta} as follows. For $1 < q < 2$ and $\varepsilon>0$, we analyze the mass with respect to the bounded convex function
\begin{equation}
     \varphi_{\varepsilon,q}(p) \coloneq (p+\varepsilon)^{-q},\qquad p\in[0,1].   
\end{equation}
Bounding the expectation of this function as we increase the size of the set $S_i$ will allow us to get a good inverse moment bound for some interval $S_i$ which may not necessarily go up to $S_n$ . Letting $\varepsilon\rightarrow 0$ by monotone convergence, $ \varphi_{\varepsilon,q}(p) \rightarrow p^{-q}$, so $\varphi_{\varepsilon,q}(P_{\rho}) = P_\rho^{-q}$. Applying \cref{app:two-gate-beta},
\begin{equation}
    \mathbb E_{G,H}[P_\rho^{-q}]
    \le \bigl(\mathbb E[B^{-q}]\bigr)^2=M_q^2,
    \qquad B\sim\operatorname{Beta}(2,2),
    \label{eq:block-local-negative-moment}
\end{equation}
where
\begin{equation}
    M_q:=\mathbb E[B^{-q}]
    =6\int_0^1 b^{1-q}(1-b)\,db
    =\frac{6}{(2-q)(3-q)}.
    \label{eq:block-beta-negative-moment}
\end{equation}
A similar lemma can be proved when expanding the interval by a
single qubit instead of two, by a similar argument involving
Jensen's inequality to reduce to the pure-state case, followed by
unitary invariance of the Haar measure. We mainly expand by two
qubits in our analysis since expanding by only one output of a single Haar
random gate instead gives probability mass analogous to a
$\operatorname{Beta}(2,2)$ variable, whose inverse expected value
becomes bounded by $M_q$ instead of $M_q^2$. 

These finite negative-moment bounds imply that the probability mass remains positive almost surely, since we can start from $R_n=1$ (which is known by definition) and proceed backwards, which implies all $R_i$'s are positive almost surely. The ratios below are therefore well defined. So for each step, conditioning on the event $\mathcal F_i$ which fixes the randomness over all the gates except $G,H$, by Eq.~\eqref{eq:block-local-negative-moment},
\begin{equation}
    \mathbb{E}_{G,H}[P_\rho^{-q}] = \mathbb E\!\left[\left(\frac{R_{i+1}}{R_i}\right)^q\middle|\mathcal F_i\right] \le M_q^{2}.
    \label{eq:block-conditional-ratio}
\end{equation}

Let us iterate this bound till some index $m$. Consider the ratio of the masses at the step $i$. We get the following bound by expanding $\mathbb{E}_{G,H}[P_\rho^{-q}]$ by the law of total expectation or the tower property.
\begin{align}
    \mathbb E\!\left[\left(\frac{R_m}{R_i}\right)^q\right]&=\mathbb E\!\left[\left(\frac{R_m}{R_{i+1}}\right)^q\mathbb E\!\left[ \left(\frac{R_{i+1}}{R_i}\right)^q\middle|\mathcal F_i\right] \right]\\
    &\le M_q^{2}\mathbb E\!\left[\left(\frac{R_m}{R_{i+1}}\right)^q\right].
\end{align}
Since $\mathcal F_i$ conditions on the gates except $G,H$, the ratio $(R_m/R_{i+1})$ can be taken outside the expectation, and then we apply the bound from Eq.~\eqref{eq:block-conditional-ratio}. Now, iterating from $i=0$ to $i=m-1$, we get
\begin{equation}
    \mathbb E\!\left[\left(\frac{R_m}{R}\right)^q\right]
    \le M_q^{|S_m|-2} \,.
    \label{eq:block-general-ratio-moment}
\end{equation}
Note that we initially started with two free qubits in $S_*$. If the number of qubits $n$ is odd, we may need to apply the iteration at the last step with $M_q$ instead of $M_q^2$; our bounds would remain the same. For the rest of the proof, we set $q=3/2$. By Eq.~\eqref{eq:block-beta-negative-moment}, $M_{3/2}=8$. Let $r_m \coloneq 2^{|S_m|}$, then
\begin{equation}
    \mathbb E\!\left[ \left(\frac{R_m}{R}\right)^{3/2} \right]\le8^{|S_m|-2} =\left(\frac{r_m}{4}\right)^3.
    \label{eq:block-ratio-moment}
\end{equation}

\noindent We now show that the mass on the larger interval $S_m$ is unlikely to be much smaller than the Haar value, by appealing to the design property of $C$. In particular, we claim that for any even $k$
\begin{equation}
    \Pr\!\left[R_m<\frac{r_m}{2d}\right]\le\frac{2^k k!}{r_m^{k/2}}\,.
    \label{eq:block-large-window-tail}
\end{equation}
The proof of the above claim is as follows.

First, note that for fixed $x$, $R_m = \sum_{z\in\{0,1\}^{S_i}} |\vev{z,x_{\overline S_i}|C|0^n}|^2$ can be rewritten as
\begin{equation}
    R_m = \sum_{z\in\{0,1\}^{S_m}} \!\!\tr\Big( C\ketbra{0^n} C^\dagger \ketbra{z,x_{\overline S_m}}\Big) = \tr\Big( C\ketbra{0^n} C^\dagger \Pi_m \Big)\,,
\end{equation}
where $\Pi_m:= \sum_{z} \ketbra{z,x_{\overline S_m}}$ is a projector with rank $r_m = 2^{|S_m|}$. For Haar random $C\sim \mu_H$ the expected value of $R_m$ is $\Ex_{\mu_H} [R_m] = \frac{r_m}{d}$, which follows from a first moment calculation. We now center the random variable as
\begin{equation}
    R_m - \frac{r_m}{d} = \tr\Big( C\ketbra{0^n} C^\dagger \Big(\Pi_m - \frac{r_m}{d} I\Big) \Big)\,,
\end{equation}
such that $\Ex_{\mu_H}[R_m - \frac{r_m}{d}]=0$. For $C\sim \mu_H$, we compute $k$-th moments of the centered quantity as
\begin{equation}
    \Ex_{\mu_H} \bigg[\Big(R_m-\frac{r_m}{d}\Big)^k\bigg] = \tr\bigg(\Ex_{\mu_H}\Big[\big(C\ketbra{0^n}C^\dagger\big)^{\otimes k}\Big] \Big(\Pi_m - \frac{r_m}{d} I\Big)^{\otimes k} \bigg) \,.
    \label{eq:Rmoments}
\end{equation}
For $C\sim \mu_H$, the distribution on the space of states induced by $\ket\psi=C\ket{0^n}$ is the Haar measure on states. The $k$-th moments of Haar random states can be written in terms of permutations on the $k$-fold system \cite{harrow2013church}
\begin{equation}
    \Ex_{\mu_H} \big[ \ketbra{\psi}^{\otimes k}\big] =  \frac{1}{\prod_{i=0}^{k-1}(d+i)} \sum_{\sigma\in S_k} P_\sigma\,,
\end{equation}
where $S_k$ is the symmetric group in $k$ elements and $P_\sigma$ is a permutation operator on the $k$-fold Hilbert space. Proceeding, we have
\begin{align}
    \Ex_{\mu_H} \bigg[\Big(R_m-\frac{r_m}{d}\Big)^k\bigg] &= \frac{1}{\prod_{i=0}^{k-1}(d+i)} \sum_{\sigma\in S_k} \tr\bigg(P_\sigma \Big(\Pi_m - \frac{r_m}{d} I\Big)^{\otimes k} \bigg)\\
    &= \frac{1}{\prod_{i=0}^{k-1}(d+i)} \sum_{\sigma\in S_k} \prod_i^{\ell(\sigma)} \tr\bigg(\Big(\Pi_m - \frac{r_m}{d} I\Big)^{\lambda_i} \bigg)\,.
\end{align}
where for every permutation $\sigma\in S_k$, let $\lambda(\sigma)$ be the cycle type of $\sigma$, an integer partition $\lambda \vdash k $ given as $\lambda = (\lambda_1, \lambda_2, \ldots, \lambda_{\ell(\sigma)})$, and where $\ell(\sigma)$ is the number of cycles in $\sigma$. Note that
\begin{equation}
    \tr\Big(\Pi_m - \frac{r_m}{d}I\Big) = 0\,,
\end{equation}
as $\Pi_m$ has rank $r_m$. Therefore, every permutation $\sigma\in S_k$ with a fixed point, i.e.\ some $\lambda_i=1$, has zero contribution in the sum. It also follows that for any $k\geq 2$
\begin{equation}
    \tr\Big(\big(\Pi_m - \frac{r_m}{d} I\big)^k\Big)\leq \tr\Big(\big(\Pi_m - \frac{r_m}{d} I\big)^2\Big) = r_m - \frac{r_m^2}{d}\leq r_m
\end{equation}
where the first inequality follows from the fact that $\Pi_m - \frac{r_m}{d} I$ is an operator with eigenvalues $\xi$ between $-1$ and $1$, as $\Pi_m$ is a projector and $r_m\leq d$, and thus $\xi^k<\xi^2$ for $|\xi|\leq 1$.
Letting $D_k = \{ \sigma\in S_k ~|~ \sigma(i)\neq i ~\text{for all}~ i\}$ be the set of all permutations with no fixed points, we find
\begin{equation}
    \Ex_{\mu_H} \bigg[\Big(R_m-\frac{r_m}{d}\Big)^k\bigg] \leq \frac{1}{\prod_{i=0}^{k-1}(d+i)} \sum_{\sigma\in D_k} r_m^{\ell(\sigma)} \leq \frac{1}{d^k}\frac{k!}{2} r_m^{k/2}\,,
\end{equation}
as $\ell(\sigma)\leq k/2$ for a permutation with no fixed points. For even $k$, the largest number of cycles in a permutation with no fixed points is the permutation consisting of $k/2$ swaps. The inequality then follows from $\prod_{i=0}^{k-1}(d+i)\geq d^k$ and $|D_k|=\lfloor (k!+1)/e\rfloor \leq k!/2$.

For a $1/d^k$-approximate $k$-design $\CC(\ell)$, using Eq.~\eqref{eq:Rmoments}
\begin{align}
    \bigg|\Ex_{\CC(\ell)} \bigg[\Big(R_m-\frac{r_m}{d}\Big)^k\bigg] - \Ex_{\mu_H} \bigg[\Big(R_m-\frac{r_m}{d}\Big)^k\bigg]\bigg| &= \bigg|\tr\bigg(\Big(\mathcal{M}^{(k)}_{\CC(\ell)}-\mathcal{M}^{(k)}_{\mu_H} \Big) \big(\ketbra{0^n}^{\otimes k}\big) \Big(\Pi_m - \frac{r_m}{d} I\Big)^{\otimes k} \bigg) \bigg|\nn
    &\leq \bigg\|\Big(\mathcal{M}^{(k)}_{\CC(\ell)}-\mathcal{M}^{(k)}_{\mu_H} \Big) \big(\ketbra{0^n}^{\otimes k}\big) \Big(\Pi_m - \frac{r_m}{d} I\Big)^{\otimes k} \bigg\|_1\nn
    &\leq \Big\|\Big(\mathcal{M}^{(k)}_{\CC(\ell)}-\mathcal{M}^{(k)}_{\mu_H} \Big) \big(\ketbra{0^n}^{\otimes k}\big)\Big\|_1 ~\Big\|\Pi_m - \frac{r_m}{d} I \Big\|_\infty^k \nn
    &\leq \frac{1}{d^k}\,,
\end{align}
where in the third line we use H\"older's inequality and in the last line use the definition of the diamond norm and the fact that the eigenvalues $\xi$ of $\Pi_m - \frac{r_m}{d} I$ are $|\xi|\leq 1$. Therefore, for a $1/d^k$-approximate $k$-design $\CC(\ell)$ the centered moments of $R_m$ for even $k$ obey
\begin{equation}
    \Ex_{\CC(\ell)} \bigg[\Big(R_m-\frac{r_m}{d}\Big)^k\bigg]  \leq \frac{1}{d^k}\frac{k!}{2} r_m^{k/2} + \frac{1}{d^k}\leq \frac{k!}{d^k}r_m^{k/2} \,.
    \label{eq:Rmdesign}
\end{equation}
Returning to the probability that $R_m$ deviates away from its expected value $r_m/d$, we bound the probability by centering and using Markov's inequality as
\begin{align}
    \Pr \left[R_m<\frac{r_m}{2d}\right]
    &\le \Pr \left[\Big|R_m-\frac{r_m}{d}\Big|\ge\frac{r_m}{2d}\right] = \Pr \left[\Big|R_m-\frac{r_m}{d}\Big|^k \ge \frac{r_m^k}{(2d)^k}\right]\\
    &\le \frac{(2d)^k}{r_m^k}\Ex_{\CC(\ell)} \bigg[\Big(R_m-\frac{r_m}{d}\Big)^k\bigg]\\
    &\le \frac{2^k k!}{r_m^{k/2}}\,,
\end{align}
where we again take $k$ to be even and use Eq.~\eqref{eq:Rmdesign}.
This proves Eq.~\eqref{eq:block-large-window-tail}.

Now, we just have to combine the above bound with Eq.~\eqref{eq:block-ratio-moment} in order to prove the small ball anticoncentration bound from Eq.~\eqref{eq:inverse-block-small-ball}. 

Let $0<t\le1$, and choose an index $m$ below. If $R \le t/d$, then either $R_m < r_m/2d$ or $R_m/R\ge r_m/2t$. Thus, by applying a union bound, substituting Eq.~\eqref{eq:block-large-window-tail}, and by Markov's inequality on Eq.~\eqref{eq:block-ratio-moment},
\begin{align}
    \Pr[R \le t/d] &\le \Pr\!\left[R_m<\frac{r_m}{2d}\right]
       +\Pr\!\left[\frac{R_m}{R}\ge\frac{r_m}{2t}\right] \\
    &\le \frac{2^k k!}{r_m^{k/2}} + \left(\frac{2t}{r_m}\right)^{3/2}
        \mathbb E\!\left[\left(\frac{R_m}{R}\right)^{3/2}\right] \\
    &\le \frac{2^k k!}{r_m^{k/2}} + \left(\frac{2t}{r_m}\right)^{3/2} \left(\frac{r_m}{4}\right)^3 = \frac{2^k k!}{r_m^{k/2}} + \frac{\sqrt2}{32}\,t^{3/2}r_m^{3/2}.
\label{eq:block-scale-bound}
\end{align}
Here the first term controls a tail bound from the anticoncentration of the $k$-design, while the second term controls the loss in probability mass between $S_m$ and $S_*$.

We choose $r_m=O(t^{-3/11})$ and $k=8$ to balance the two terms. We divide the possible values $t$ can take into two cases, and bound each one separately. If $4t^{-3/11}\le d$, take the first index $m$ for which
$r=2^{|S_m|}\ge4t^{-3/11}$. Since each step adds at most
two qubits,
\begin{equation}
    4t^{-3/11}\le r_m \le16t^{-3/11}.    
\end{equation}
Then, by Eq.~\eqref{eq:block-scale-bound},
\begin{equation}
\Pr[R\le t/d] \le \left(\frac{2^8 8!}{4^4}+2\sqrt2\right)t^{12/11}.  
\end{equation}
For the other case, if $4t^{-3/11}>d$, let $m=L$. Then $R_L=1$, so applying Markov's inequality directly to Eq.~\eqref{eq:block-ratio-moment}
gives
\begin{align}
    \Pr[R\le t/d] &=\Pr\!\left[\frac1R\ge\frac dt\right] \le\left(\frac td\right)^{3/2}
      \mathbb E[R^{-3/2}] \le\left(\frac td\right)^{3/2} \left(\frac d4\right)^3 \\
    &=\frac1{64}t^{3/2}d^{3/2} \le\frac18t^{12/11}.
\end{align}
Consequently, $\Pr[dR\le t]\le At^{12/11}$ for some constant $A>0$ and all $0<t\le1$, proving Eq.~\eqref{eq:inverse-block-small-ball}.

Finally, integrating the lower tail, substituting $t=d/y$, and then applying the bound from Eq.~\eqref{eq:inverse-block-small-ball} gives
\begin{align}
    \mathbb E[R^{-1}]
    &=\int_0^\infty\Pr[R^{-1}>y]\,dy\\
    &\le d+\int_d^\infty\Pr[R<1/y]\,dy\\
    &=d+d\int_0^1\frac{\Pr[dR<t]}{t^2}\,dt\\
    &\le d\left(1+A\int_0^1t^{-10/11}\,dt\right)\\
    &=(1+11A)d = O(d) \,.
\end{align}
This completes the proof.
\end{proof}

\section*{Acknowledgments}
We would like to thank Shivan Mittal for helpful discussions. NHJ and AS acknowledge support in part from DOE grant DE-SC0025615. AS is supported by a TQI Graduate Fellowship.

\section*{Statement of AI Use}
Generative AI provided some of the main ideas (and figure) in the proof of \cref{thm:inverse-moment}, and the initial ideas for proofs in the Appendix, over successive iterations through which we developed and refined the proof techniques. It was also useful in simplifying the proof of \cref{lem:intermediate-fourier-band}, in particular suggesting the application of the Hadamard three-circle theorem. We have verified and edited all proofs for correctness and clarity.

\appendix

\section{Bound from low moments}
\begin{lemma}
\label{lem:fourier-distance-from-moments}
Given two probability density functions $f,g$ supported in $[0,1]$, let
\begin{equation}
H(s)\coloneq \int_{-L}^{L}(f(x)-g(x))e^{-sx}\,dx \,,    
\end{equation}
where $L=2k/d$. Assume that the first $k$ moments of $f,g$ are $m_j,m_j'$ such that $\abs{m_j-m_j'} \le \eps$, and $m_j \le j!/d^j$, for $0 \le j \le k$, where $0< \eps \le d^{-k}$, $d \ge 2$. Then, there exists a constant $c_0>0$ such that, for $k \ge 2$ and any $s\in\mathbb C$ such that $|s|\le d/8$,
\begin{equation}
\abs{H(s)} \le 2^{-c_0k} \,.
\end{equation}
\end{lemma}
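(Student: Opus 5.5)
Since $f,g$ are supported in $[0,1]$ and $L=2k/d$, the integral defining $H(s)$ only runs over $[0,L]$. The plan is to Taylor expand $e^{-sx}$ to order $k-1$ with an explicit remainder, exactly as in \cref{lem:moment-matching-low-frequency} but now for complex $s$ and truncated support. We write
\begin{equation}
H(s)=\sum_{r=0}^{k-1}\frac{(-s)^r}{r!}\int_0^L x^r (f-g)(x)\,dx + \int_0^L \mathcal{R}_k(sx)\,(f-g)(x)\,dx,
\end{equation}
where $\abs{\mathcal{R}_k(sx)}\le \frac{(\abs{s}x)^k}{k!}e^{\abs{s}x}$. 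On $[0,L]$ with $\abs{s}\le d/8$ we have $\abs{s}x\le k/4$, so $e^{\abs{s}x}\le e^{k/4}$. Hence the remainder term is at most
\begin{equation}
\frac{\abs{s}^k e^{k/4}}{k!}(m_k+m_k') \le 3\, e^{k/4}8^{-k},
\end{equation}
using $m_k'\le m_k+\eps\le 2k!/d^k$. This is $2^{-\Omega(k)}$ because $e^{1/4}/8<1/4$.

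The main term needs care, because truncating to $[0,L]$ makes the truncated moments differ from $m_r,m_r'$. We therefore write $\int_0^L x^r(f-g) = (m_r-m_r') - \int_L^1 x^r(f-g)$. The first piece contributes at most $\eps\sum_{r<k}\abs{s}^r/r!\le \eps e^{d/8}$. Since $\eps\le d^{-k}$, this is tiny once $d\ge 8k$, say. If $d$ is smaller relative to $k$, we instead bound $\eps\sum_{r<k}(d/8)^r/r!\le k\,\eps\,(d/8)^{k}\cdot\max(1,\cdot)$ by the largest term. This gives $\le k\,8^{-k}\cdot\mathrm{poly}$, which is still $2^{-\Omega(k)}$, using $(d/8)^r/r!\le \max(e^{d/8},(d/8)^{k})$ and $d^{-k}$. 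For the tail piece, we bound $\abs{\sum_{r<k}\frac{(-sx)^r}{r!}}\le e^{\abs{s}x}\le e^{dx/8}$ and integrate against $f+g$ over $x>L$. The Markov tail bound $\Pr[X\ge x]\le (j/(xd))^j$ with $j=k$ gives $\Pr[X\ge x]\le (k/(xd))^k$ for $x\ge L$. We then compute $\int_L^\infty e^{dx/8}\,d\Pr$ by integrating by parts. Substituting $u=xd$, this becomes roughly $\int_{2k}^\infty e^{u/8}(k/u)^k\,du/8$. For $u\ge 2k$ the integrand decays like $e^{u/8-k\log(u/k)}$, which is at most $e^{k/4-k\log 2}\cdot$(geometric decay), so it is $2^{-\Omega(k)}$.

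The main obstacle is exactly this tail piece. Only $k$ moments are controlled, so $e^{\abs{s}x}$ cannot be bounded beyond $x\approx k/d$ by moments alone. The choice $\abs{s}\le d/8$ must make $e^{u/8}$ lose against $(k/u)^k$ for all $u\ge 2k$. I would verify this by checking that $\log$ of the integrand, $u/8-k\log(u/k)$, is decreasing for $u\ge 8k$ and is at most $k(1/4-\log 2)$ on $[2k,8k]$. Beyond $u=8k$ the decay is at least polynomial of degree $k$, but the integrand is not monotone on $[2k,8k]$. So I will simply bound its maximum over $[2k,8k]$ times the length $6k$, plus the convergent tail. This gives a total of $\mathrm{poly}(k)\,e^{-ck}$.

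Summing the three contributions yields $\abs{H(s)}\le 2^{-c_0k}$ for a suitable constant $c_0>0$ and all $k\ge 2$, with polynomial prefactors absorbed by shrinking $c_0$ and, for small $k$, by adjusting constants.
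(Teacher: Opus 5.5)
Your decomposition is the same as the paper's, and your bound on the $r\ge k$ remainder $\int_0^L\mathcal{R}_k(sx)(f-g)\,dx$ is fine. The gap is in the tail piece $\sum_{r<k}\frac{(-s)^r}{r!}\int_L^1x^r(f-g)\,dx$. You bound the truncated Taylor polynomial by $e^{\abs{s}x}\le e^{dx/8}$ and use the $k$-th moment Markov tail, which leaves roughly $\frac18\int_{2k}^{d}e^{u/8}(k/u)^k\,du$. Your monotonicity analysis of this integrand is reversed. The derivative of $u/8-k\log(u/k)$ is $1/8-k/u$, so the integrand decreases on $[2k,8k]$ and increases for $u>8k$. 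Your ``convergent tail'' therefore diverges, and the integral up to $u=d$ is of order $e^{d/8}(k/d)^k$. That is astronomically large in the regime $d=2^n\gg k$ where the lemma is used.

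No sharper tail estimate can fix this, because the quantity $\int_L^1e^{dx/8}(f+g)\,dx$ you reduced to can genuinely be huge under the hypotheses. Take $f=g$ with a narrow bump of mass about $k!/d^k$ near $x=1$ and the remaining mass near $0$. For $d>2k$ this satisfies every assumption, yet the integral is of order $k!\,e^{d/8}/d^k$ while $H\equiv 0$. You correctly noted that $e^{\abs{s}x}$ cannot be controlled beyond $x\approx k/d$ by $k$ moments, and then tried to do exactly that.

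The same mistake appears in your $\Delta m_r$ piece. With $\eps=d^{-k}$, the bound $\eps e^{d/8}$ tends to infinity as $d\to\infty$, so it is not tiny for $d\ge 8k$, and your case split is backwards. The termwise bound $\eps\abs{s}^r/r!\le d^{r-k}8^{-r}/r!\le 2^{r-k}8^{-r}/r!$ (using $d\ge 2$) gives $e^{1/4}2^{-k}$ for every $d$.

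The missing idea is that the degree-$(k-1)$ partial sum grows only polynomially in $x$, and that growth can be charged to the $k$-th moment. For $x\ge L$ and $r<k$ one has $x^r\le L^{r-k}x^k$, hence
\[
\Bigl\lvert\sum_{r<k}\frac{(-sx)^r}{r!}\Bigr\rvert\le\Bigl(\frac{x}{L}\Bigr)^k\sum_{r<k}\frac{(\abs{s}L)^r}{r!}\le e^{k/4}\Bigl(\frac{x}{L}\Bigr)^k .
\]
Integrating against $f+g$ over $[L,1]$ bounds the tail piece by $e^{k/4}(m_k+m_k')/L^k\le 3e^{k/4}k!/(2k)^k\le 3(e^{1/4}/2)^k$. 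This is what the paper does. It pairs this with the mirror inequality $x^r\le L^{r-k}x^k$ on $[0,L]$ for $r\ge k$, so both parts of the remainder are bounded at once by $e^{\abs{s}L}M_k/L^k$ with $M_k=\int_0^1x^k\abs{z(x)}\,dx$. With this replacement and the uniform bound on the $\Delta m_r$ sum, the rest of your argument goes through.
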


\begin{proof}
This proof follows similarly to the proof of \cref{lem:moment-matching-low-frequency}. Since we assumed the first $k+1$ moments of $f$ and $g$ are $\eps$-close, we define $z(x) \coloneq f(x)-g(x)$ and 
\begin{equation}
\Delta m_r \coloneq (m_r - m'_r) \,, \quad \abs{\Delta m_r} \le \eps    
\end{equation}
for $r = 0$ to $k$. So, we can bound $\abs{H(s)}$ through a Taylor series expansion, similar to Eq.~\eqref{eq:char-taylor}. Let us first define the truncated moments as
$\mu_r\coloneq\int_{-L}^L x^r z(x)\,dx$, and the absolute moments of $z(x)$ as $M_k\coloneq\int_0^1 x^k\abs{z(x)}\,dx$. By the Taylor series expansion of $e^{-sx}$, we can expand
\begin{align}
  H(s) &= \int_{-L}^L z(x)e^{-sx}\,dx
        = \sum_{r=0}^{\infty}\frac{(-s)^r}{r!}\,\mu_r \\
       &= \sum_{r=0}^{k-1}\frac{(-s)^r}{r!}\,\Delta m_r + R_k(s),
\end{align}
where we define the remainder term as
\begin{equation}
R_k(s)\coloneq\sum_{r=0}^{k-1}\frac{(-s)^r}{r!}\,(\mu_r-\Delta m_r) + \sum_{r=k}^{\infty}\frac{(-s)^r}{r!}\,\mu_r = T_1(s) + T_2(s) \,.    
\label{eq:rem-Rks}
\end{equation}
We now show a common bound for the absolute values of both $T_1(s)$ and $T_2(s)$. Notice that for $r<k$ we know $\mu_r-\Delta m_r=-\int_L^1 x^r z(x)\,dx$, and for $r\ge k$
we have $\mu_r=\int_0^L x^r z(x)\,dx$. In both cases $x^r\le L^{r-k}x^k$ on
the range of integration (since $x\ge L$ when $r<k$, and $x\le L$ when
$r\ge k$). Hence, using the triangle inequality on Eq.~\eqref{eq:rem-Rks}, and then the bound $\abs{s}L\le k/4$,
\begin{equation}
\abs{R_k(s)} \le \abs{T_1(s)} + \abs{T_2(s)} = \frac{M_k}{L^k}\sum_{r=0}^{\infty}\frac{(\abs{s}L)^r}{r!}
  =e^{\abs{s}L}\,\frac{M_k}{L^k}\le e^{k/4}\,\frac{M_k}{L^k}.    
\end{equation}
Now, notice that we can bound the moments of $z(x)$ by the sum $M_k\le m_k+m_k'\le 2m_k+\eps\le 3k!/d^k$. We also know $L=2k/d$, $\eps \le d^{-k}$, and $\abs{s} \le d/8$, so substituting into Eq.~\eqref{eq:rem-Rks}, we get
\begin{equation}
\abs{H(s)}\le\eps\sum_{r=0}^{k-1}\frac{\abs{s}^r}{r!}+\abs{R_k(s)} \le \sum_{r=0}^{k-1}\frac{d^{r-k}}{8^rr!} + e^{k/4}\cdot\frac{3k!}{d^k}\cdot\frac{d^k}{(2k)^{k}} \le 2^{-c_0k} \,,
\end{equation}
for some constant $0 < c_0 \le 1/4$.
\end{proof}

\begin{lemma}[Two-gate bound]\label{app:two-gate-beta}
Let $q_1,q_2,q_3$ be three consecutive qubits. Let $G,H$ be two-qubit gates drawn independently from the Haar measure on $\mathrm{SU}(4)$, and consider a circuit $C$ such that $G$ first acts on the pair of qubits $(q_2,q_3)$ and then $H$ acts on the pair of qubits $(q_1,q_2)$. Fix $x_1,x_2\in\{0,1\}$, and let $\rho$ be the initial state on qubits $q_1,q_2,q_3$. Define
\begin{equation}
P_\rho \coloneq \Tr\!\left[(\ket{x_1x_2}\!\bra{x_1x_2}\otimes I_3) HG\rho G^{\dagger}H^{\dagger} \right].
\end{equation}
Then for every bounded convex function $\varphi:[0,1]\to\mathbb{R}$
and any $\rho$,
\begin{equation}
\mathbb{E}_{G,H}[\varphi(P_\rho)] \le \mathbb{E}_{B_1,B_2}[\varphi(B_1 \cdot B_2)],
\qquad
B_1,B_2 \overset{\mathrm{i.i.d.}}{\sim} \mathrm{Beta}(2,2), 
\end{equation}
where $\mathrm{Beta}(2,2)$ is a random variable with the density function $6p(1-p)$ on $[0,1]$.
\end{lemma}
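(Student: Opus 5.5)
The proof works in two stages. First we reduce to the case where $\rho$ is pure. Then we identify the distribution of $P_\rho$ as a product of two independent $\mathrm{Beta}(2,2)$ variables, or show it is dominated by one in the convex order.

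\emph{Reduction to pure states.} Write $\rho=\sum_j \lambda_j \ketbra{\psi_j}$. The map $\rho\mapsto P_\rho$ is linear, so $P_\rho=\sum_j\lambda_j P_{\psi_j}$. Since $\varphi$ is convex, Jensen gives $\varphi(P_\rho)\le\sum_j\lambda_j\varphi(P_{\psi_j})$. Taking $\mathbb{E}_{G,H}$, it suffices to prove the bound for each pure $\ket{\psi}$ on $q_1q_2q_3$.

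\emph{Pure-state analysis.} Fix a pure $\ket\psi$. The right-invariance of the Haar measure on $G$ will be used to simplify $\psi$. We compute $P_\psi$ by first conditioning on $G$ and then averaging over $H$. Let $\sigma\coloneq \operatorname{Tr}_{3}$ of the $q_1q_2$ reduced state of $G\ketbra{\psi}G^\dagger$ (trace out $q_3$ only). Then $P_\psi=\bra{x_1x_2}H\sigma H^\dagger\ket{x_1x_2}$, and $\sigma$ has rank at most $2$, because $q_1q_2$ is purified by the single qubit $q_3$. Write $\sigma$'s eigenvalues as $\lambda,1-\lambda$. Since $H$ is Haar on $\SU(4)$, the vector $H^\dagger\ket{x_1x_2}$ is a Haar-random unit vector $u\in\mathbb C^4$. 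Hence
\[
P_\psi=\lambda |u_1|^2+(1-\lambda)|u_2|^2 ,
\]
with $(|u_1|^2,\dots,|u_4|^2)$ Dirichlet$(1,1,1,1)$. In the extreme case $\lambda=1/2$, $P_\psi=\tfrac12(|u_1|^2+|u_2|^2)$, with $|u_1|^2+|u_2|^2\sim\mathrm{Beta}(2,2)$. For general $\lambda$, the variable $\lambda|u_1|^2+(1-\lambda)|u_2|^2$ is a mixture-type average. The key claim is that it is dominated in the convex order by $B_1\cdot\tilde B$ for a suitable factor, via the decomposition $|u_1|^2=(|u_1|^2+|u_2|^2)\cdot\beta$ with $\beta\sim\mathrm{Beta}(1,1)$ independent of $S\coloneq |u_1|^2+|u_2|^2\sim\mathrm{Beta}(2,2)$. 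This gives $P_\psi=S\cdot(\lambda\beta+(1-\lambda)(1-\beta))$.

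The second factor must come from $G$. The quantity $\lambda$ (or rather the weight profile seen by $H$) is determined by how $G$ entangles $q_2q_3$. After fixing the output $x_1x_2$ direction, the relevant weight should be the probability mass that $G$ sends onto a two-dimensional subspace of the $4$-dimensional $q_2q_3$ space. For Haar $G$ this mass is again $|v_1|^2+|v_2|^2\sim\mathrm{Beta}(2,2)$, independent of $H$. Concretely, conditioning on $q_1$'s component of $\psi$ and using invariance, I would express $P_\psi=S_H\cdot T$. Here $T$ is a conditional expectation (over the uniform $\beta$) of a $\mathrm{Beta}(2,2)$-distributed quantity $S_G$ from $G$. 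Conditional Jensen, $\mathbb{E}\varphi(S_H\,\mathbb{E}[X\mid\cdot])\le\mathbb{E}\varphi(S_H X)$, then yields domination by $\varphi(B_1B_2)$.

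\emph{Main obstacle.} The delicate step is the precise factorization: showing that after averaging, $P_\psi$ is a conditional expectation of $S_H S_G$ with $S_H,S_G$ independent $\mathrm{Beta}(2,2)$. Then convexity, via conditional Jensen applied to the Haar averages, gives the inequality. I would first try the simplest case of product $\psi=\ket{a}\ket{\phi_{23}}$ to verify that the claimed law is exactly $B_1B_2$. Then I would handle entanglement between $q_1$ and $q_{23}$ by linearity plus Jensen over a Schmidt decomposition. The risk is that the $H$-marginal onto a fixed $x_1x_2$ couples the two Beta factors, so that a direct equality fails. In that case I would fall back to proving only the needed inequality for $\varphi(p)=(p+\eps)^{-q}$ by explicit moment computation.
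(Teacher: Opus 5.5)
\textbf{Genuine gap: the second $\mathrm{Beta}(2,2)$ factor is never produced.} Your reduction to pure states is correct and matches the paper. So is the average over $H$, and so is the factorization $P_\psi=S\cdot T$ with $T=\lambda\beta+(1-\lambda)(1-\beta)$ and $S\sim\mathrm{Beta}(2,2)$ independent of $(G,\beta)$. The gap is the step you flag as the main obstacle: you never show that $T$ is a conditional expectation of a $\mathrm{Beta}(2,2)$ variable generated by $G$.

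The route you sketch for entangled $\ket{\psi}$ does not work as stated. Suppose you apply linearity plus Jensen over the Schmidt decomposition $\ket{\psi}=\sum_j\sqrt{\lambda'_j}\ket{a_j}_1\ket{\chi_j}_{23}$ directly to $P_\psi=\bra{\Phi}(\ket{w}\bra{w}\otimes I_3)\ket{\Phi}$. This fails because $\ket{w}=H^\dagger\ket{x_1x_2}$ acts nontrivially on $q_1$, so the cross terms between branches $j\neq j'$ do not vanish; a superposition is not a mixture. Also, $T$ as you wrote it depends on $G$ only through the spectrum of the $q_1q_2$ marginal, and the Haar invariance of $G$ on $q_2q_3$ cannot be applied to that directly. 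Your fallback of checking $\varphi(p)=(p+\varepsilon)^{-q}$ by explicit moments would give at most the special case used in the inverse moment bound, not the lemma for every bounded convex $\varphi$.

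\textbf{The missing idea: move $T$ onto qubit $q_3$.} This is what the paper does.
\begin{itemize}
\item Because $\ket{\Phi}=G\ket{\psi}$ is pure, its $q_1q_2$ marginal and its $q_3$ marginal $\tau_3$ have the same spectrum $\{\lambda,1-\lambda\}$. Let $\ket{f_1},\ket{f_2}$ be the eigenvectors of $\tau_3$, and let $v\in\mathbb{C}^2$ be Haar random and independent of $G$.
\item Then $|\langle f_1|v\rangle|^2$ is uniform on $[0,1]$. Hence $(G,T)\overset{d}{=}(G,Y_G(v))$, where $Y_G(v)=\bra{v}\tau_3\ket{v}=\|(I_{12}\otimes\bra{v})G\ket{\psi}\|^2$.
\item $Y_G(v)$ is the expectation of an operator acting as the identity on $q_1$. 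So the $q_1|q_2q_3$ Schmidt decomposition now splits it into a genuine convex combination, $Y_G(v)=\sum_j\lambda'_j\|(I_2\otimes\bra{v})G\ket{\chi_j}\|^2$.
\item Each term is the weight of the Haar random state $G\ket{\chi_j}\in\mathbb{C}^4$ on the two-dimensional subspace $\mathbb{C}^2\otimes\ket{v}$. It is therefore $\mathrm{Beta}(2,2)$ for every fixed $v$.
\end{itemize}
Your conditional-Jensen step then goes through as intended, giving $\mathbb{E}\,\varphi(S\,Y_G(v))\le\mathbb{E}\,\varphi(B_1B_2)$. The averaging is over the Schmidt index $j$ (given $G$ and $v$), not over $\beta$. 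Your product-state check is the single-branch case of this identity, which is why it gives exactly $B_1B_2$ there.
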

\begin{proof}
By Jensen's inequality on the function $\varphi$, we just need to consider the case where $\rho$ is a pure state since by linearity
$\varphi(P_\rho) \le \sum_i p_i\, \varphi(P_{\ketbra{\psi_i}})$ and thus $\mathbb{E}_{G,H}[\varphi(P_\rho)] \le \sum_i p_i\, \mathbb{E}_{G,H}[\varphi(P_{\ketbra{\psi_i}})]$. So we simply need to prove an upper bound for an arbitrary pure state input. In the following, let $\rho = \ketbra{\psi}$ be an arbitrary pure state. The rest of the proof proceeds by using the left/right invariance of the Haar measure acting on the respective mixed state subsystems on qubits $q_1,q_2,q_3$ of $\ket{\psi}$.

We first average over the gate $H$, which is applied second. So we fix the gate $G$ and let $\ket{\Phi}\coloneq G\ket{\psi}$. Since $q_3$ is a single qubit, $\ket{\Phi}$ has a Schmidt decomposition across the partition $[(1,2)\,|\,(3)]$ of the form
\begin{equation}
\ket{\Phi}=\sqrt{\lambda_1}\,\ket{e_1}_{(1,2)}\ket{f_1}_{(3)}
+\sqrt{\lambda_2}\,\ket{e_2}_{(1,2)}\ket{f_2}_{(3)} \,,   
\end{equation}
such that $\lambda_1+\lambda_2=1$, $\ket{e_1},\ket{e_2}\in\mathbb{C}^4$, and $\ket{f_1},\ket{f_2}$ forms an orthonormal basis of $\mathbb{C}^2$.

Now, notice that the state $\ket{w}\coloneq H^\dagger\ket{x_1x_2}$ is a Haar random state in $\mathbb{C}^4$, since $H$ is Haar random. So, we rewrite
\begin{align}
P_{\ket{\psi}\bra{\psi}} &= \Tr\!\left[(\ket{x_1x_2}\bra{x_1x_2}\otimes I_3)          H\ket{\Phi}\bra{\Phi}H^{\dagger} \right] \\
         &= \bra{\Phi}H^\dagger\bigl(\ket{x_1x_2}\bra{x_1x_2}\otimes I_3\bigr)H\ket{\Phi} \\
         &=\bra{\Phi}\bigl(\ket{w}\bra{w}\otimes I_3\bigr)\ket{\Phi} \\
         &=\bigl\|(\bra{w}\otimes I_3)\ket{\Phi}\bigr\|^2 \,, \\
         &= B_1\sum_{k=1}^2\lambda_k|u_k|^2,
\end{align}
where $B_1\coloneq|\langle e_1|w\rangle|^2+|\langle e_2|w\rangle|^2$ and $u_k\coloneq\langle e_k|w\rangle/\sqrt{B_1}$.
By unitary invariance, $B_1$ has the same distribution as the sum of two of the four outcome probabilities $|w_i|^2$ of a Haar random state in $\mathbb{C}^4$. These probabilities are Haar uniform, so $B_1\sim\mathrm{Beta}(2,2)$. By a similar argument, $(u_1,u_2)$ also follows the distribution of $(\langle f_1|v\rangle,\langle f_2|v\rangle)$ for a uniform $v\in\mathbb{C}^2$ independent of $B_1$, so replacing it by this vector does not change the distribution of $P_{\ket{\psi}\bra{\psi}}$. So, we can define
\begin{equation}
Y_G(v)\coloneq\sum_{k=1}^2\lambda_k|\langle f_k|v\rangle|^2
=\bigl\|(I_{12}\otimes\bra{v}_{(3)})\ket{\Phi}\bigr\|^2 \,. 
\label{eq:Ygv-phi}
\end{equation}
Then, we obtain, for any fixed gate $G$,
\begin{equation}\label{eq:two-gate-H-avg}
\mathbb{E}_H[\varphi(P_{\ket{\psi}\bra{\psi}})]=\mathbb{E}_{B_1,v}[\varphi(B_1Y_G(v))],
\end{equation}
where $B_1\sim\mathrm{Beta}(2,2)$, and $v\in\mathbb{C}^2$ is uniformly random and independent of $G$.

\noindent Next, we will average over the gate $G$, which is applied first. Recall $\ket{\Phi} = G\ket{\psi}$ for the input state $\ket{\psi}$. So, similarly as before, we decompose $\ket{\psi}$ in terms of its Schmidt decomposition across the partition $[(1)|(2,3)]$ and then substitute it in Eq.~\eqref{eq:Ygv-phi} as follows.
\begin{equation}
Y_G(v) =\bigl\|(I_{12}\otimes\bra{v}_{(3)})\ket{\Phi}\bigr\|^2 = \lambda_0'\|(I_{2}\otimes\bra{v}_{(3)})G\ket{\chi_0}_{(2,3)}\|^2 + \lambda_1'\|(I_{2}\otimes\bra{v}_{(3)})G\ket{\chi_1}_{(2,3)}\|^2 \,,
\end{equation}
where the states $\ket{\chi_0}_{(2,3)}, \ket{\chi_1}_{(2,3)}$ occur with probability $\lambda_1',\lambda_2'$ in the decomposition of $\ket{\psi}$. Now, for fixed $t\in[0,1]$ and $v$, by Jensen's inequality,
\begin{align}
\mathbb{E}_G[\varphi\bigl(tY_G(v)\bigr)]
&\le
\sum_{j\in\{0,1\}}\lambda_j'\mathbb{E}_G[\varphi\bigl(t\bigl\|(I_{2}\otimes\bra{v}_{(3)})G\ket{\chi_j}\bigr\|^2]\bigr)\\
&= \mathbb{E}_{B_2}[\varphi(tB_2)],
\qquad B_2\sim\operatorname{Beta}(2,2).
\label{eq:two-gate-G-avg}
\end{align}
Indeed, each state $G|\chi_j\rangle$ is Haar random on $\mathbb{C}^4$, so the projections are also Haar random. So, we can combine Eq.~\eqref{eq:two-gate-H-avg} and Eq.~\eqref{eq:two-gate-G-avg}, taking the expectation over $G$ and $H$, which gives
\begin{equation}
\mathbb{E}_{G,H}[\varphi(P_\rho)] \le \mathbb{E}_{G,H}[\varphi(P_{\ket{\psi}\bra{\psi}})] \le \mathbb {E}_{B_1,B_2}\varphi(B_1B_2) \,.
\end{equation}
This proves the lemma.
\end{proof}

\bibliographystyle{alpha}
\bibliography{references}

\end{document}